\newtheorem{assumption}{Assumption}
\newtheorem{definition}{Definition}
\newtheorem{remark}{Remark}
\newtheorem{lemma}{Lemma}
\newtheorem{theorem}{Theorem}
\newcommand{\norm}[1]{\left\lVert #1 \right\rVert}
\date{}
\title{A novel constraint tightening approach for robust data-driven predictive control$^\dagger$\thanks{$^\dagger$This work was supported by Deutsche Forschungsgemeinschaft (DFG, German Research Foundation) under grant MU 3929/1-2 and AL 316/12-2 - 279734922, under Germany's Excellence Strategy - EXC 2075 - 390740016, and under grant 468094890. We acknowledge the support by the Stuttgart Center for Simulation Science (SimTech). The authors thank the International Max Planck Research School for Intelligent Systems (IMPRS-IS) for supporting Julian Berberich.}}
\author{Christian~Kl{\"o}ppelt$^*$,~Julian~Berberich$^{**}$,~Frank~Allg{\"o}wer$^{**}$,~Matthias~A.~M{\"u}ller$^*$%
\thanks{$^*$Christian Kl{\"o}ppelt and Matthias A. M{\"u}ller are with the Institute of Automatic Control, Leibniz University Hannover, Germany.}
\thanks{$^{**}$Julian Berberich and Frank Allg{\"o}wer are with the Institute for Systems Theory and Automatic Control, University of Stuttgart, Germany}}
\begin{document}

\maketitle
\thispagestyle{plain}
\pagestyle{plain}

\begin{abstract}
	In this paper, we present a data-driven model predictive control (MPC) scheme that is capable of stabilizing unknown linear time-invariant systems under the influence of process disturbances. To this end, Willems' lemma is used to predict the future behavior of the system. This allows the entire scheme to be set up using only a priori measured data and knowledge of an upper bound on the system order. First, we develop a state-feedback MPC scheme, based on input-state data, which guarantees closed-loop practical exponential stability and recursive feasibility as well as closed-loop constraint satisfaction. The scheme is extended by a suitable constraint tightening, which can also be constructed using only data. In order to control a priori unstable systems, the presented scheme contains a pre-stabilizing controller and an associated input constraint tightening. We first present the proposed data-driven MPC scheme for the case of full state measurements, and also provide extensions for obtaining similar closed-loop guarantees in case of output feedback. The presented scheme is applied to a numerical example.
\end{abstract}

{\small	
  \textbf{\textit{Keywords---}} data-driven MPC, robust MPC
}

\section{Introduction}\label{sec:introduction}

In recent years, there has been significant interest in designing data-driven model predictive control (MPC) schemes, in which predictions are not based on a parametric model of the system, but rather directly on a priori collected input/output data from the system, thus circumventing the challenging intermediate step of finding an accurate model. This is done by employing the so-called Willems' fundamental lemma\cite{Willems2005}, which states that for a controllable linear system, all possible system trajectories can be parametrized in terms of linear combinations of time-shifts of one single, persistently exciting, trajectory.

A direct data-based MPC scheme based on Willems' lemma was first considered by Yang et al.\cite{Yang2015} and Coulson et al.\cite{Coulson2019a}. Guarantees for recursive feasibility, stability, and robustness (in the presence of measurement noise) of the closed loop were first proven by Berberich et al.\cite{Berberich2020a}. In recent years, various further properties and extensions of this data-driven MPC framework have been studied, compare, e.g., the works by Coulson et al.\cite{Coulson2021}, Huang et al.\cite{Huang2021a}, Yin et al.\cite{Yin2021a,Yin2021b}, Xue and Matni\cite{Xue2021}, Furieri et al.\cite{Furieri2021}, Berberich et al.\cite{Berberich2021} and the overview paper by Markovsky and Dörfler\cite{Markovsky2021}. 

One of the major strengths of MPC is its ability to take constraints in the optimization problem into account, and therefore, guarantee their satisfaction in closed-loop operation. For data-driven MPC schemes, achieving constraint satisfaction is similarly important. However, in practice we typically only have access to noisy data. Thus, to achieve closed-loop constraint satisfaction, a suitable constraint tightening is required, similar to model-based MPC\cite{Kouvaritakis2016,Chisci2001,Mayne2005}. Such a constraint tightening takes into account possible (worst-case) disturbances as well as their influence on the system dynamics in order to ensure that no disturbance that may occur in the future can result in constraint violation. 

For data-driven MPC schemes relying on an a priori identification of the system model, there already exist schemes that provide a proper constraint tightening even in the case of additive process noise\cite{Aswani2013,Terzi2019}. However, in the direct data-driven setting --based on Willems' lemma-- this problem has not been conclusively solved so far. In Berberich et al.\cite{Berberich2020b}, closed-loop constraint satisfaction is shown in case of measurement noise; however, no process noise and no input constraint tightening is considered, resulting in the fact that the proposed scheme can only be applied (without being overly conservative) to open-loop stable systems. Process disturbances acting additively on the dynamics have been considered by Huang et al. \cite{Huang2021} and Umenberger et al.\cite{Umenberger2021}. However, both schemes lack the aforementioned closed-loop guarantees, and moreover, rely on the knowledge of a priori measured disturbances. Recently, Liu et al.\cite{Liu2021} proposed a scheme guaranteeing closed-loop stability and recursive feasibility in the presence of process disturbances, which, however, lacks of guarantees for closed-loop constraint satisfaction.

In this paper, we propose a data-driven MPC scheme for which recursive feasibility, robust stability, and closed-loop constraint satisfaction can be ensured in the presence of process disturbances. In particular, we adapt the state/output constraint tightening originally proposed by Berberich et al.\cite{Berberich2020b} to the case of process disturbances and propose a suitable additional input constraint tightening to allow for a pre-stabilizing feedback, such that the scheme can also be applied to open-loop unstable systems. We first consider the case where full state measurement is available, before presenting extensions to the output feedback case. We discuss how certain constants required for the proposed constraint tightening can be computed from a priori collected data. For both schemes, guarantees for practical exponential stability, recursive feasibility and closed-loop constraint satisfaction are proven.

The remainder of the paper is structured as follows. In Section \ref{sec:setup} the problem setup and preliminaries, such as the concept of persistency of excitation and Willems' lemma, are introduced. Next, in Section \ref{sec:state_feedback}, we set up the first proposed data-driven MPC scheme based on state measurements. To this end, we elaborate the data-driven parametrization of the constraint tightening, explain the MPC scheme and prove the aforementioned closed-loop guarantees. Thereafter, in Section \ref{sec:output_feedback} we present data-driven output-feedback MPC based on the case where only output measurements are available. We apply the proposed scheme to a numerical example in Section \ref{sec:examples}, and end with some concluding remarks in Section \ref{sec:conclusion}.

{\emph{Notation:} For a sequence $\{z_k\}_{k=0}^{N-1}$, we define the Hankel matrix of depth $L$ as
\begin{equation*}
	H_L(z) = 
	\begin{bmatrix} 
		z_0 & z_1 & \dots & z_{N-L} \\
		z_1 & z_2 & \dots & z_{N-L+1} \\
		\vdots & \vdots & \ddots & \vdots \\
		z_{L-1} & z_L & \dots & z_{N-1}
	\end{bmatrix},
\end{equation*}
and the stacked window from time instant $a$ to $b$ as
\begin{equation*}
	z_{\left[a,b\right]} = \begin{bmatrix} z_a \\ \vdots \\ z_b \end{bmatrix}.
\end{equation*}}

\section{Problem setup and preliminaries}\label{sec:setup}

In this paper, we consider the discrete-time multi-input multi-output LTI system
\begin{equation}
	\begin{split}
		x_{k+1} & = Ax_k + Bu_k + w_k, \\
		y_k & = Cx_k + Du_k,
	\end{split}
	\label{eq:lti_sys}
\end{equation}
with the state $x_k\in\mathbb{R}^n$, the input $u_k\in\mathbb{R}^m$, the output $y_k\in\mathbb{R}^p$ and the process disturbance $w_k\in\mathbb{R}^n$. The setup can be extended to include measurement noise as well, see Remark \ref{re:rem1} below. Throughout the paper, we assume that \eqref{eq:lti_sys} is a minimal realization, i.e., that the pair $\left( A,B\right)$ is controllable and the pair $\left( A,C\right)$ is observable. Moreover, we consider the matrices $A,B,C,D$ as being unknown and the only knowledge about the system available being its order $n$.

Moreover, we assume that the process disturbances belongs to a hypercube (precise definitions will be given in Sections \ref{sec:state_feedback} and \ref{sec:output_feedback}). The goal of this paper is to construct a data-driven MPC scheme that stabilizes the origin and ensures input and state constraint satisfaction (cf. Section \ref{sec:state_feedback}) or output constraint satisfaction (cf. Section \ref{sec:output_feedback}), where the respective constraint sets are given by hypercubes.

To this end, we apply a persistently exciting (p.e.) input sequence to the system, and measure the resulting state/output sequence, where a persistently exciting sequence is defined as follows.
\begin{definition}
	A sequence $\{u_k\}_{k=0}^{N-1}$, with $u_k\in\mathbb{R}^m$, is persistently exciting of order $L$ if $\mathrm{rank}\left(H_L(u)\right)= mL$.
\end{definition}
We want to make use of Willems' fundamental lemma for the prediction in an MPC problem.
\begin{lemma}[Willems' lemma \cite{Willems2005}]\label{lem:willems}
	Suppose $\{u_k,\hat{y}_k\}_{k=0}^{N-1}$ is a trajectory of the controllable system
	\begin{equation}
		\begin{split}
			\hat{x}_{k+1} & = A\hat{x}_k + Bu_k, \\
			\hat{y}_k & = C\hat{x}_k + Du_k,
			\label{eq:undisturbed_lti_sys}
		\end{split}
	\end{equation}
	and $u$ is persistently exciting of order $L + n$. Then, $\{\bar{u}_k,\bar{y}_k\}_{k=0}^{L-1}$ is a trajectory of System \eqref{eq:undisturbed_lti_sys} if and only if there exists $\alpha\in\mathbb{R}^{N-L+1}$ such that
	\begin{equation}
		\begin{bmatrix}H_L(u) \\ H_L(\hat{y}) \end{bmatrix}\alpha = \begin{bmatrix} \bar{u}\\ \bar{y} \end{bmatrix}.
	\end{equation}
\end{lemma}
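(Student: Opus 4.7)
The plan is to prove the two directions of the equivalence separately. For the easy direction (existence of $\alpha$ implies $(\bar u, \bar y)$ is a trajectory), I appeal to linearity of \eqref{eq:undisturbed_lti_sys}: every column of $\begin{bmatrix} H_L(u) \\ H_L(\hat y) \end{bmatrix}$ is a length-$L$ window $\{u_k, \hat y_k\}_{k=j}^{j+L-1}$ of the measured trajectory, which by time-invariance is itself a trajectory of \eqref{eq:undisturbed_lti_sys}; any linear combination of trajectories of an LTI system is again a trajectory, so $(\bar u, \bar y)$ is a trajectory.

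The converse requires showing that every length-$L$ trajectory lies in the column span of $\begin{bmatrix} H_L(u) \\ H_L(\hat y) \end{bmatrix}$. My approach is a dimension argument. Let $\{\hat x_k\}_{k=0}^{N-1}$ denote the state sequence underlying the data. An arbitrary length-$L$ trajectory is determined by its initial state $\bar x_0 \in \mathbb R^n$ and its input sequence $\bar u$ via $\bar y = \mathcal O_L \bar x_0 + \mathcal T_L \bar u$, where $\mathcal O_L$ and $\mathcal T_L$ denote the extended observability matrix and the input--output Toeplitz matrix built from $(A,B,C,D)$. The same identity applied column-wise yields $H_L(\hat y) = \mathcal O_L H_1(\hat x_{[0,N-L]}) + \mathcal T_L H_L(u)$, so the problem reduces to a rank claim for the input--state Hankel matrix $\begin{bmatrix} H_L(u) \\ H_1(\hat x_{[0,N-L]}) \end{bmatrix}$.

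The main obstacle, and the heart of the proof, is to show that under persistent excitation of order $L+n$ and controllability of $(A,B)$, this matrix has full row rank $mL + n$. I would argue by contradiction: suppose a nonzero $[\eta^\top,\, \xi^\top]$ annihilates every column, giving $\eta^\top u_{[j,j+L-1]} + \xi^\top \hat x_j = 0$ for $j = 0, \ldots, N-L$. Using $\hat x_{j+1} = A\hat x_j + B u_j$ recursively, one can expand $\hat x_j$ in terms of $\hat x_0$ and the preceding inputs, thereby converting the relation into a linear constraint on the columns of $H_{L+n}(u)$ together with $\hat x_0$. Persistent excitation of order $L+n$ then rules out any nontrivial relation on $H_{L+n}(u)$, forcing $\eta = 0$, while controllability of $(A,B)$ together with varying initial states along the data stream implies that the reachable subspace spans $\mathbb R^n$, forcing $\xi = 0$.

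Once this rank lemma is established, the column span of $\begin{bmatrix} H_L(u) \\ H_L(\hat y) \end{bmatrix}$ has dimension $mL + n$, which matches the dimension of the length-$L$ trajectory space of \eqref{eq:undisturbed_lti_sys}; hence equality of these two subspaces holds and the required $\alpha$ exists for every trajectory $(\bar u, \bar y)$. I expect the rank lemma to be the main technical hurdle, since it is the only place where persistent excitation and controllability must be combined, via careful bookkeeping of the state recursion.
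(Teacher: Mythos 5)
The paper does not prove this statement at all: it is quoted verbatim from Willems et al.\ as a known result, so there is no in-paper argument to compare against. Judged on its own merits, your overall architecture is the standard modern (state-space) route: the ``if'' direction by linearity and time-invariance is fine, and reducing the ``only if'' direction to full row rank of $\bigl[\begin{smallmatrix} H_L(u) \\ H_1(\hat x_{[0,N-L]}) \end{smallmatrix}\bigr]$ via the factorization through $\mathcal O_L$ and $\mathcal T_L$ is exactly right (though note that when $\mathcal O_L$ is not injective the trajectory space has dimension strictly less than $mL+n$, so the conclusion should be drawn from surjectivity of the factorization onto the trajectory space rather than from ``matching dimensions'').

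The genuine gap is in the rank lemma, which you correctly identify as the crux but whose mechanism you describe incorrectly. Expanding $\hat x_j = A^j \hat x_0 + \sum_{i=0}^{j-1} A^{j-1-i} B u_i$ converts the annihilation relation into constraints on windows $u_{[0,j+L-1]}$ of \emph{growing} length anchored at time $0$; these are not columns of $H_{L+n}(u)$, so persistent excitation of order $L+n$ cannot be invoked against them. The standard argument instead propagates the relation \emph{forward}: from $\xi^\top \hat x_{j+1} + \eta^\top u_{[j+1,j+L]} = 0$ and the state recursion one obtains $\xi^\top A \hat x_j + \xi^\top B u_j + \eta^\top u_{[j+1,j+L]} = 0$, and iterating $n$ times yields $n+1$ relations of the form $\xi^\top A^k \hat x_j + \eta_k^\top u_{[j,j+L+k-1]} = 0$ on \emph{sliding} windows of length at most $L+n$; Cayley--Hamilton then eliminates the state term, producing a left annihilator of $H_{L+n}(u)$ whose nontriviality is guaranteed by controllability (if it vanished, $\xi$ would annihilate the controllability matrix). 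Your closing step --- ``controllability together with varying initial states along the data stream implies that the reachable subspace spans $\mathbb R^n$, forcing $\xi=0$'' --- is circular as stated: that the state samples $\hat x_0,\dots,\hat x_{N-L}$ span $\mathbb R^n$ is precisely (the $\eta=0$ slice of) the rank claim you are trying to prove, and it does not follow from controllability alone without the persistent-excitation argument above.
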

This lemma states that in the absence of disturbances, i.e., if $w_k=0$ for all $k\geq 0$, all trajectories of system \eqref{eq:lti_sys} can be parametrized by linear combinations of time shifts of a priori measured, sufficiently exciting input/output trajectories. In the following two sections, we set up MPC schemes that use these trajectories for the prediction of the systems behavior.

\section{Data-driven state-feedback predictive control}\label{sec:state_feedback}

In this section, we present a robust data-driven state-feedback MPC scheme with closed-loop guarantees on stability and constraint satisfaction in the presence of process noise. In Subsection \ref{ssec:scheme_sf}, we introduce the data-driven MPC scheme for the case of available state measurements. Thereafter, in Subsection \ref{ssec:guarantees_sf} we prove the closed-loop guarantees of the introduced control scheme. Finally, in Subsection \ref{ssec:system_constants} we show how the system constants, which are used to set up the constraint tightening of the MPC scheme, can be approximated purely from data.

\subsection{Proposed MPC scheme}\label{ssec:scheme_sf}

For the first data-driven predictive control scheme, we consider the availability of full state measurement, i.e., $C=I$, $D=0$ in \eqref{eq:lti_sys}. Moreover, we assume that the process disturbance belongs to the hypercube $w_t\in \mathbb{W} = \left\{ w\in\mathbb{R}^n \mid \norm{w}_\infty \leq w_\mathrm{max} \right\}$ for all $t\geq 0$, where $w_\mathrm{max}\geq 0$ is known, and the input and state constraint sets are given by the hypercubes $u_t\in \mathbb{U} = \left\{ u\in\mathbb{R}^m \mid \norm{u}_\infty \leq u_\mathrm{max} \right\}$ and $x_t\in \mathbb{X} = \left\{ x\in\mathbb{R}^n \mid \norm{x}_\infty \leq x_\mathrm{max} \right\}$ for some $u_\mathrm{max}>0$, $x_\mathrm{max}>0$. As will become clear later in this section, it is crucial for the construction of a proper constraint tightening that the prediction model is stable. If this is not the case a priori (i.e., $A$ is unstable), then this can be enforced via a pre-stabilizing input parametrization
\begin{equation}
	u_k = Kx_k + \nu_k,
	\label{eq:pre-stab}
\end{equation}
as it is common, for example, in tube-based MPC\cite{Kouvaritakis2016,Chisci2001}. The state feedback matrix $K$ is chosen such that all eigenvalues of $A_K = A+BK$ strictly lie inside the unit disc. Such a pre-stabilizing controller can be computed purely from data, e.g., following the approaches by Berberich et al.\cite{Berberich2020c} or van Waarde et al.\cite{VanWaarde2020b}. Throughout this paper, we assume that such a controller is known a priori. In case of a stable system, the following scheme can be applied with $K=0$.

To make use of Lemma \ref{lem:willems} for the prediction of open-loop state sequences, we consider the input $\nu_k$ of the pre-stabilized system
\begin{equation}
	x_{k+1} = A_Kx_k + B\nu_k + w_k.
	\label{eq:pre_stab_sf}
\end{equation}
We now apply a p.e. input sequence $\{\nu_k^d\}_{k=0}^{N-1}$ of length $N$ to System \eqref{eq:pre_stab_sf}, and measure the associated disturbed state sequence $\{x^d_k\}_{k=0}^{N}$, where the superscript "$d$" denotes a priori collected data. 
\begin{assumption}\label{as:pe}
	The input sequence $\{\nu_k^d\}_{k=0}^{N-1}$ is persistently exciting of order $L+n+1$.
\end{assumption}
We denote the cumulated disturbance influencing the collected data at time $k$ as
\begin{equation}
	d^d_k = \sum_{i=0}^{k-1} A^{k-1-i}_Kw_i^d
	\label{eq:data_disturbance}
\end{equation}
and the undisturbed state at as
\begin{equation}
	\hat x_k^d = x_k^d - d^d_k.
	\label{eq:undisturbed_data}
\end{equation}

\begin{remark} \label{re:rem1}
	Note that the following results can be easily extended to the case where, apart from the process disturbance $w_k$, also additive measurement noise on the state measurements is present. In this case, the disturbance sequence in \eqref{eq:data_disturbance} has to be extended by the actual measurement noise instant occurring at time $k$ and, in case of $K\neq 0$, the past measurement noise which are fed back into and propagated through the system dynamics. However, for the sake of simplicity, we consider only process disturbances throughout the paper.
\end{remark}

With these a priori generated data sequences, we are now able to set up the following optimal control problem (OCP), given the measured state $x_t$ at time $t$ and with the prediction horizon $L$
\begin{subequations}
	\label{eq:ocp_sf}
	\begin{alignat}{4}
		&J_L^\ast(x_{t}) = & \min_{\substack{\alpha(t), \sigma(t), \\ \bar{\nu}(t), \bar{x}(t) }} & \sum_{k=0}^{L-1} \left(\norm{\bar{\nu}_k(t)}_R^2 + \norm{\bar{x}_k(t)}_Q^2 \right) + \lambda_\alpha w_\mathrm{max}\norm{\alpha(t)}_2^2 + \frac{\lambda_\sigma}{w_\mathrm{max}} \norm{\sigma(t)}_2^2 \label{eq:ocp_cost} \\
		&& \text{s.t.} & \begin{bmatrix} \bar{\nu}(t)\\ \bar{x}(t) + \sigma(t) \end{bmatrix} = \begin{bmatrix}H_{L}(\nu^d) \\ H_{L+1}(x^d) \end{bmatrix}\alpha(t), \label{eq:ocp_dyn} \\
		&&& \bar{x}_{0}(t) = x_{t}, \label{eq:ocp_init} \\
		&&& \bar{x}_{L}(t) = 0, \label{eq:ocp_term} \\
		&&& \norm{\bar x_k(t)}_\infty + a_{u,k}\norm{\bar \nu(t)}_1 + a_{\alpha,k}\norm{\alpha(t)}_1 + a_{\sigma,k}\norm{\sigma_k(t)}_\infty + a_{c,k} \leq x_\mathrm{max}, \label{eq:ocp_sc} \\
		&&& \norm{\bar \nu_k(t)}_\infty + b_{u,k}\norm{\bar \nu(t)}_1 + b_{\alpha,k}\norm{\alpha(t)}_1 + b_{\sigma,k}\norm{\sigma_k(t)}_\infty + b_{c,k} + \norm{K\bar x_k (t)}_\infty \leq u_\mathrm{max}, \label{eq:ocp_ic} \\
		&&& \forall k = 0,\dots , L-1.
	\end{alignat}
\end{subequations}
We denote the optimal solution of \eqref{eq:ocp_sf} at time $t$ by $\bar \nu^\ast (t)$, $\bar x^\ast (t)$, $\alpha^\ast (t)$, $\sigma^\ast (t)$. In \eqref{eq:ocp_dyn}, we make use of Lemma \ref{lem:willems} for the prediction of future state sequences of the system. Note that, a Hankel matrix of depth $L+1$ is used in the second block row of \eqref{eq:ocp_dyn}, since the predicted state sequence contains $L+1$ elements (from $k=0$ to $k=L$), whereas the predicted input sequence only contains $L$ elements (from $k=0$ to k=$L-1$). Moreover, note that, as it is common in predictive control based on Willems' lemma, we make use of a slack variable $\sigma (t)$ (first introduced by Coulson et al.\cite{Coulson2019a}) that renders \eqref{eq:ocp_dyn} feasible, even in the presence of disturbances. The slack variable $\sigma$ as well as the variable $\alpha$ are regularized in \eqref{eq:ocp_cost}. This leads to smaller values of $\sigma$ and $\alpha$, improving the prediction accuracy and reducing the influence of disturbances in the Hankel matrices. For further discussion on these issues, see also Section IV.A by Berberich et al.\cite{Berberich2020a} and Section IV by Dörfler et al. \cite{Doerfler2022}. Problem \eqref{eq:ocp_sf} contains the tightened state and input constraints \eqref{eq:ocp_sc} and \eqref{eq:ocp_ic}. Note that both constraints depend on $\bar u(t)$, $\alpha (t)$ and $\sigma (t)$. Together with suitably defined coefficients $a_{u,k}$, $a_{\alpha,k}$, $a_{\sigma,k}$, $a_{c,k}$, and $b_{u,k}$, $b_{\alpha,k}$, $b_{\sigma,k}$, $b_{c,k}$, which will be defined later on, this constrained tightening ensures recursive feasibility and closed-loop constraint satisfaction (cf. Theorem \ref{th:sf}), i.e., $\norm{x_t}_\infty\leq x_\mathrm{max}$ and $\norm{u_t}_\infty = \norm{Kx_t + \nu_t}_\infty \leq u_\mathrm{max}$ for all $t\geq 0$. Constraint \eqref{eq:ocp_ic} can be dropped if no pre-stabilizing controller is used, i.e., $K=0$. Problem \eqref{eq:ocp_sf} is similar to the one by Berberich et al.\cite{Berberich2020b}, which, however, does not consider process disturbances and the input constraint tightening \eqref{eq:ocp_ic}. The predictive control scheme is used in an $n$-step receding horizon manner, i.e., at time $t$ we solve \eqref{eq:ocp_sf} and choose $\nu_{t+k} = \bar \nu_k^\ast(t)$ in \eqref{eq:pre-stab} for $k=0,\dots,n-1$. 

In the following, we introduce the coefficients used to set up the tightened state and input constraints \eqref{eq:ocp_sc} and \eqref{eq:ocp_ic}. To this end, we first introduce some system constants. We denote the disturbance sequence $\{w_i\}_{i=0}^{k}$ propagated $k$ steps through the system dynamics by
\begin{equation}
	d_k \coloneqq  \sum_{i=0}^{k-1} A^{k-1-i}_Kw_i,
	\label{eq:prop_disturbance}
\end{equation}
and an upper bound on its $\infty$-norm as
\begin{equation}
	\bar d_k \geq \sum_{i=0}^{k-1} \norm{A^{k-1-i}_K}_\infty w_\mathrm{max} \geq \norm{d_k}_\infty.
	\label{eq:disturbance_bound}
\end{equation}
Moreover, we define the constant $c_{pe} = \norm{H_{u\hat x}^\dagger}_1$, with
\begin{equation}
	H_{u\hat x} = \begin{bmatrix} H_L\left(\nu^d\right) \\ H_1\left(\hat x^d_{[0,N-L-1]}\right)\end{bmatrix},
\end{equation}
where $H_{u\hat x}^\dagger$ is the Moore-Penrose inverse of $H_{u\hat x}$. Using these system constants, as well as $\rho_{A,k} \geq \norm{A_K^k}_\infty$, we define
\begin{equation}
	c_{\alpha,k} = \rho_{A,k}\bar d_{N-L} + \bar d_{N-L+k}, \quad  c_{\sigma,k} = \rho_{A,k} + 1,
	\label{eq:const_perror}
\end{equation}
for $k=0,\dots,L$. Moreover, we introduce the controllability constant $\Gamma>0$, which is chosen such that, starting at any $x_0$, we can find an input sequence $\nu_{[0,n-1]}$ steering the state of the pre-stabilized system \eqref{eq:pre_stab_sf} to the origin in $n$ steps and satisfying
\begin{equation}
	\norm{\nu_{[0,n-1]}}_1 \leq \Gamma \norm{x_0}_\infty.
	\label{eq:controllability}
\end{equation}
Note that such a constant exists as the pair $\left(A,B\right)$ is controllable.

We are now ready to define the coefficients of the state and input constraint tightening as
\begin{align}
	\begin{split}
		a_{u,k} & = 0,\ a_{\alpha,k} = c_{\alpha,k},\ a_{\sigma,k}  = c_{\sigma,k},\ a_{c,k} = \bar d_k, \\
		b_{u,k} & = 0,\ b_{\alpha,k} = \bar K c_{\alpha,k},\ b_{\sigma,k}  = \bar K c_{\sigma,k},\ b_{c,k}  = \bar K \bar d_k,
	\end{split}
	\label{eq:tight_constants_1}
\end{align}
for $k = 0,\dots,n-1$, and
\begin{align}
	\begin{split}
		a_{u,k+n} & = a_{u,k} + a_{\alpha,k}c_{pe} + a_{\sigma,k}c_{pe}\bar d_{N-1}, \\
		a_{\alpha,k+n} & =  a_{u,k+n}\Gamma c_{\alpha,L-1} + c_{\alpha,k+n}, \\
		a_{\sigma,k+n} & = a_{u,k+n}\Gamma c_{\sigma,L-1} + c_{\sigma,k+n}, \\
		a_{c,k+n} & = a_{c,k} + a_{\alpha,k}c_{pe}\left(nx_\mathrm{max} + n\bar d_n\right) + a_{\sigma,k}\left(\bar d_{N-1}c_{pe}\left(nx_\mathrm{max} + n\bar d_n\right) + \bar d_n\right) + \bar d_n, \\
		b_{u,k+n} & = b_{u,k} + b_{\alpha,k}c_{pe} + b_{\sigma,k}c_{pe}\bar d_{N-1}, \\
		b_{\alpha,k+n} & = b_{u,k+n}\Gamma c_{\alpha,L-1} + \bar K c_{\alpha,k+n}, \\
		b_{\sigma,k+n} & = b_{u,k+n}\Gamma c_{\sigma,L-1} + \bar K c_{\sigma,k+n}, \\
		b_{c,k+n} & = b_{c,k} + b_{\alpha,k}c_{pe}\left(nx_\mathrm{max} + n\bar d_n\right) + b_{\sigma,k}\left(\bar d_{N-1}c_{pe}\left(nx_\mathrm{max} + n\bar d_n\right) + \bar d_n\right) + \bar K\bar d_n,
	\end{split}
	\label{eq:tight_constants_2}
\end{align}
for $k = 0,\dots,L-n-1$, where $\bar K = \norm{K}_\infty$. Note that $\bar d_k$ and $\rho_{A,k}$ grow exponentially if $A_K$ has eigenvalues outside the unit disc. This is the main motivation for the usage of the pre-stabilizing controller \eqref{eq:pre-stab}, as diverging $\bar d_k$ and $\rho_{A,k}$ would also lead to diverging $c_{\alpha,k}$ and $c_{\sigma,k}$ and, therefore, to large coefficients \eqref{eq:tight_constants_1} and \eqref{eq:tight_constants_2}. This would in general yield an infeasible OCP \eqref{eq:ocp_sf} even for small prediction horizons $L$. In order to set up the coefficients above, the system constants $\Gamma$, $c_{pe}$, $\rho_{A,k}$ for $k=0,\dots,L$, and $d_k$ for $k=0,\dots,N-1$ have to be known. All of these constants can be approximated from data as will be shown in Subsection \ref{ssec:system_constants}.

\subsection{Theoretical guarantees}\label{ssec:guarantees_sf}

Firstly, we denote the undisturbed state at time $t+k$ resulting from an open-loop application of $\bar \nu^\ast(t)$ as
\begin{equation}
	\hat x_{t+k}^\ast \coloneqq A_K^k x_t + \sum_{i=0}^{k-1}A_K^{k-1-i}B\bar \nu_i^\ast(t).
	\label{eq:ol_nom_state}
\end{equation}
An upper bound for the prediction error between this undisturbed open-loop state trajectory $\hat x^\ast$ and the predicted optimal state sequence $\bar x^\ast(t)$ at time $t$ can be derived by the following lemma.
\begin{lemma}
	If \eqref{eq:ocp_sf} is feasible at time t, then
	\begin{equation}
		\label{eq:prediction_error_sf}
		\norm{\hat{x}_{t+k}^\ast - \bar x^\ast_k(t)}_\infty \leq c_{\alpha,k}\norm{\alpha^\ast(t)}_1 + c_{\sigma,k}\norm{\sigma^\ast(t)}_\infty
	\end{equation}
	holds for all $k = 0,\dots,L$.
	\label{lem:pred_error_sf}
\end{lemma}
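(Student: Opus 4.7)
My plan is to exploit the OCP equality constraint \eqref{eq:ocp_dyn} together with the additive decomposition $x_k^d = \hat x_k^d + d_k^d$ from \eqref{eq:data_disturbance}--\eqref{eq:undisturbed_data}, which yields
\begin{equation*}
\bar x_k^\ast(t) = H_{1,k}(\hat x^d)\alpha^\ast(t) + H_{1,k}(d^d)\alpha^\ast(t) - \sigma_k^\ast(t),
\qquad
\bar\nu_k^\ast(t) = H_{1,k}(\nu^d)\alpha^\ast(t),
\end{equation*}
where $H_{1,k}(z)$ denotes the $k$-th block row of the Hankel matrix used in \eqref{eq:ocp_dyn}. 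Since $\{\nu_k^d,\hat x_k^d\}$ is a trajectory of the undisturbed pre-stabilized system \eqref{eq:pre_stab_sf} (with $w_k^d\equiv 0$) and $\nu^d$ is p.e.\ of order $L+n+1$ by Assumption~\ref{as:pe}, Lemma~\ref{lem:willems} guarantees that $\tilde x_k := H_{1,k}(\hat x^d)\alpha^\ast(t)$ and $\tilde\nu_k := H_{1,k}(\nu^d)\alpha^\ast(t)$ form a valid open-loop trajectory of the \emph{undisturbed} pre-stabilized system; in particular, $\tilde x_{k+1} = A_K \tilde x_k + B\tilde\nu_k$ and $\tilde\nu_k=\bar\nu_k^\ast(t)$.

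Next I will compare the two undisturbed trajectories $\hat x^\ast$ from \eqref{eq:ol_nom_state} and $\tilde x$ through their difference $\Delta_k := \hat x_{t+k}^\ast - \tilde x_k$. Because both obey the recursion $z_{k+1}=A_K z_k + B\bar\nu_k^\ast(t)$ with identical input, $\Delta_{k+1}=A_K\Delta_k$, hence $\Delta_k = A_K^k\Delta_0$. The initial mismatch $\Delta_0$ is determined by the initial-state constraint \eqref{eq:ocp_init}: $\hat x_t^\ast = x_t = \bar x_0^\ast(t) = \tilde x_0 + H_{1,0}(d^d)\alpha^\ast(t) - \sigma_0^\ast(t)$, so $\Delta_0 = H_{1,0}(d^d)\alpha^\ast(t) - \sigma_0^\ast(t)$. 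Using $\|A_K^k\|_\infty\leq \rho_{A,k}$ and the fact that the columns of $H_{1,0}(d^d)$ are $d_0^d,\dots,d_{N-L}^d$ with $\|d_j^d\|_\infty\leq \bar d_j \leq \bar d_{N-L}$ (via \eqref{eq:disturbance_bound}), I obtain $\|\Delta_k\|_\infty \leq \rho_{A,k}\bigl(\bar d_{N-L}\|\alpha^\ast(t)\|_1 + \|\sigma_0^\ast(t)\|_\infty\bigr)$.

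Finally I reassemble the error: $e_k := \hat x_{t+k}^\ast - \bar x_k^\ast(t) = \Delta_k - H_{1,k}(d^d)\alpha^\ast(t) + \sigma_k^\ast(t)$. The remaining Hankel term is bounded by $\|H_{1,k}(d^d)\alpha^\ast(t)\|_\infty \leq \bar d_{N-L+k}\|\alpha^\ast(t)\|_1$, since the largest column index appearing in the $k$-th row of $H_{L+1}(d^d)$ is $N-L+k$. Combining via the triangle inequality and using $\|\sigma_k^\ast(t)\|_\infty,\|\sigma_0^\ast(t)\|_\infty\leq \|\sigma^\ast(t)\|_\infty$ yields
\begin{equation*}
\|e_k\|_\infty \leq \bigl(\rho_{A,k}\bar d_{N-L} + \bar d_{N-L+k}\bigr)\|\alpha^\ast(t)\|_1 + (\rho_{A,k}+1)\|\sigma^\ast(t)\|_\infty,
\end{equation*}
which matches \eqref{eq:prediction_error_sf} by the definitions \eqref{eq:const_perror} of $c_{\alpha,k}$ and $c_{\sigma,k}$. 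The only delicate point is the bookkeeping of Hankel-matrix column indices (distinguishing $\bar d_{N-L}$ from $\bar d_{N-L+k}$), which is what produces the two-term structure of $c_{\alpha,k}$; everything else is an application of Willems' lemma plus a one-line linear recursion.
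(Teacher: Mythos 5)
Your proposal is correct and follows essentially the same route as the paper's proof: both compare the undisturbed open-loop trajectory $\hat{x}^\ast$ with the ``denoised'' Hankel prediction $H_{L+1}(\hat{x}^d)\alpha^\ast(t)$, observe that these are trajectories of the undisturbed pre-stabilized system driven by the same input $\bar{\nu}^\ast(t)$ so that their difference evolves as $A_K^k\Delta_0$ with $\Delta_0 = H_1(d^d_{[0,N-L]})\alpha^\ast(t)-\sigma_0^\ast(t)$, and then bound the residual term $\sigma_k^\ast(t)-H_1(d^d_{[k,N-L+k]})\alpha^\ast(t)$ by $\bar{d}_{N-L+k}\norm{\alpha^\ast(t)}_1 + \norm{\sigma^\ast(t)}_\infty$. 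Your index bookkeeping ($\bar{d}_{N-L}$ for the initial mismatch, $\bar{d}_{N-L+k}$ for the $k$-th block row) matches the definitions in \eqref{eq:const_perror}, so nothing further is needed.
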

\begin{proof}
	Similar to the proof of Lemma 2 in the work of Berberich et al.\cite{Berberich2020a}, we start by considering the error between the undisturbed open-loop state and the state prediction resulting from undisturbed data in the Hankel matrix
	\begin{equation}
		x^-_{[t,t+L]} \coloneqq \hat x_{[t,t+L]}^\ast - H_{L+1}\left( \hat x^d\right)\alpha^\ast(t) = \hat x_{[t,t+L]}^\ast - \left(H_{L+1}\left( x^d\right)- H_{L+1}\left( d^d\right)\right)\alpha^\ast(t).
	\end{equation}
	Note that $\hat x_{[t,t+L-1]}^\ast$ and $H_{L}\left( \hat x^d\right)\alpha^\ast(t)$ are trajectories of the undisturbed system \eqref{eq:pre_stab_sf} resulting from the application of $\bar \nu^\ast(t)$ with different initial states. Due to \eqref{eq:ocp_init} and \eqref{eq:ocp_dyn}, the initial condition of $H_{L+1}\left(\hat x^d\right)\alpha^\ast (t)$ is
	\begin{equation}
		H_1\left(\hat x^d_{[0,N-L]}\right)\alpha^\ast (t) = x_t + \sigma_0 (t) - H_1\left(\hat d^d_{[0,N-L]}\right)\alpha^\ast (t).
	\end{equation}
	Thus, for the difference between both trajectories, it holds that
	\begin{equation}
		x^-_{t+k} = A_K^k\left(H_1\left( d^d_{[0,N-L]}\right)\alpha^\ast (t) - \sigma_0 (t) \right)
	\end{equation}
	for $k=0,\dots,L$. To show \eqref{eq:prediction_error_sf}, note that
	\begin{align}
		\norm{\hat{x}_{t+k}^\ast - \bar x^\ast_k(t)}_\infty & \leq \norm{x^-_{t+k}}_\infty + \norm{\sigma_k^\ast(t)- H_1\left( d^d_{[k,N-L+k]}\right)\alpha^\ast (t)}_\infty \\
			& \leq \left(\norm{A_K^k}_\infty\bar d_{N-L} + \bar d_{N-L-1}\right)\norm{\alpha^\ast (t)}_1 + \norm{A_K^k}_\infty\norm{\sigma_0^\ast(t)}_\infty + \norm{\sigma_k^\ast(t)}_\infty,
	\end{align}
	where the second inequality holds due to
	\begin{align}
		\norm{H_1\left( d^d_{[k,N-L+k]}\right)\alpha^\ast (t)}_\infty & \leq \norm{d^d_{[k,N-L+k]}}_\infty\norm{\alpha^\ast (t)}_1\\
		& \leq \bar d_{N-L+k}\norm{\alpha^\ast (t)}_1
	\end{align}
	for $k=0,\dots,L$. Therefore, we obtain \eqref{eq:prediction_error_sf} with $c_{\alpha,k} = \rho_{A,k}\bar d_{N-L} + \bar d_{N-L+k}$, $c_{\sigma,k} = \rho_{A,k} + 1$, and $\rho_{A,k} \geq \norm{A_K^k}_\infty$.	
\end{proof}

Using the result of Lemma \ref{lem:pred_error_sf}, we can now state our main result, which establishes recursive feasibility, practical exponential stability, and input and state constraint satisfaction of the closed-loop system, assuming that the initial state is feasible for Problem \eqref{eq:ocp_sf} and the disturbance bound is sufficiently small.

\begin{theorem} \label{th:sf}
	Suppose that Assumption \ref{as:pe} holds. Then, for any $V_\mathrm{ROA} > 0$, there exist $\underline{\lambda}_\alpha$, $\overline{\lambda}_\alpha$, $\underline{\lambda}_\sigma$, $\overline{\lambda}_\sigma$ such that for all $\lambda_\alpha$, $\lambda_\sigma$ satisfying
	\begin{equation}
		\underline{\lambda}_\alpha \leq \lambda_\alpha \leq \overline{\lambda}_\alpha, \quad \underline{\lambda}_\sigma \leq \lambda_\sigma \leq \overline{\lambda}_\sigma,
	\end{equation}
	there exist $\bar w_\mathrm{max}$, $\bar c_{pe} >0$ as well as a continuous, strictly increasing function $\beta : [0,\bar w_\mathrm{max}] \to [0, V_\mathrm{ROA}]$ with $\beta(0) = 0$, such that for all $w_\mathrm{max}$ and $c_{pe}$ satisfying
	\begin{equation}
		w_\mathrm{max} \leq \min{\left\{\bar w_\mathrm{max},\; \frac{\bar c_{pe}}{c_{pe}}\right\}}
	\end{equation}
	the following holds for the closed loop resulting from an application of the $n$-step MPC scheme:
	\begin{enumerate}
		\item[(i)] \label{thrm_sf_i} If $J_L^\ast(x_0) \leq V_\mathrm{ROA}$, then OCP \eqref{eq:ocp_sf} is feasible at any time $t\geq 0$.
		\item[(ii)] \label{thrm_sf_ii} For any initial condition satisfying $J_T^\ast(x_0) \leq V_\mathrm{ROA}$ it holds that $x_t\in\mathbb{X}$ and $u_t\in\mathbb{U}$ for all $t\geq 0$, and $J_L^\ast(x_t)$ converges exponentially to $J_L^\ast(x_t) \leq \beta (\bar w_\mathrm{max})$.
	\end{enumerate}
\end{theorem}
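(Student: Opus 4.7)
The plan is to adapt the standard robust-MPC recursive-feasibility-plus-Lyapunov argument to this data-driven setting, with the main technical wrinkle being that all prediction errors must be tracked through the auxiliary variables $\alpha$ and $\sigma$ rather than an explicit model. I would prove the three assertions in the order (a) a candidate construction for \eqref{eq:ocp_sf} at time $t+n$ that stays feasible, (b) closed-loop state/input constraint satisfaction during the $n$ applied steps, and (c) a decrease inequality for $J_L^\ast$ that yields practical exponential stability to a ball of size $\beta(\bar w_\mathrm{max})$.

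For step (a), given the optimal tuple $(\bar\nu^\ast(t),\bar x^\ast(t),\alpha^\ast(t),\sigma^\ast(t))$, I would define a candidate input at $t+n$ by concatenating the shifted tail $\bar\nu^\ast_{k+n}(t)$ for $k=0,\ldots,L-n-1$ with a length-$n$ steering sequence produced by the controllability property \eqref{eq:controllability}, whose $1$-norm is bounded by $\Gamma\,\norm{\hat x^\ast_{t+L}}_\infty$. The candidate state is the corresponding open-loop trajectory of the undisturbed pre-stabilized system starting from $x_{t+n}$; the candidate $\alpha(t+n)$ is obtained by applying $H_{u\hat x}^\dagger$ to the stacked candidate input and initial state, yielding $\norm{\alpha(t+n)}_1 \leq c_{pe}\bigl(\norm{\bar\nu(t+n)}_1+\norm{x_{t+n}}_\infty+n\bar d_n\bigr)$; and the candidate $\sigma(t+n)$ absorbs the mismatch between the Hankel-matrix prediction built from disturbed data and the undisturbed trajectory, which by \eqref{eq:data_disturbance}--\eqref{eq:disturbance_bound} is of order $\bar d_{N-1}\,\norm{\alpha(t+n)}_1$. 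The bookkeeping task is then to plug these bounds into \eqref{eq:ocp_sc}--\eqref{eq:ocp_ic} for the candidate and verify that the resulting conditions are implied by the tightened constraints at time $t$ -- this is exactly what the recursion \eqref{eq:tight_constants_2} encodes, with the $c_{pe}$ factors coming from the pseudo-inverse bound on $\alpha$, the $\Gamma c_{\alpha,L-1}$ and $\Gamma c_{\sigma,L-1}$ factors from the controllability tail, and the $\bar d_n,\,nx_\mathrm{max},\,\bar d_{N-1}$ terms from residual disturbance-propagation and state-magnitude contributions.

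For step (b), at any time $t+k$ with $k=0,\ldots,n-1$ the true state decomposes as $x_{t+k}=\hat x^\ast_{t+k}+d_k$ with $\norm{d_k}_\infty\leq\bar d_k$, so Lemma~\ref{lem:pred_error_sf} immediately gives $\norm{x_{t+k}}_\infty\leq\norm{\bar x^\ast_k(t)}_\infty+c_{\alpha,k}\norm{\alpha^\ast(t)}_1+c_{\sigma,k}\norm{\sigma^\ast(t)}_\infty+\bar d_k$, which is $\leq x_\mathrm{max}$ by \eqref{eq:ocp_sc} and the choice \eqref{eq:tight_constants_1}; the input constraint follows analogously using the extra $\norm{K\bar x_k(t)}_\infty$ term in \eqref{eq:ocp_ic}. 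For step (c), I would compare the candidate cost to $J_L^\ast(x_t)$: the shifted-tail contribution telescopes (modulo $O(w_\mathrm{max})$ corrections from the disturbance-induced mismatch between $\bar x^\ast_{k+n}(t)$ and the candidate state at $t+n$); the controllability-based tail contributes $O(\Gamma^2\,\norm{\hat x^\ast_{t+L}}_\infty^2)=O(w_\mathrm{max}^2)$ since $\bar x^\ast_L(t)=0$ by \eqref{eq:ocp_term}; and the candidate's regularization contributes $O(\lambda_\alpha w_\mathrm{max}+\lambda_\sigma w_\mathrm{max})$. Combining with the quadratic lower bound on the stage cost and exponential controllability of the nominal pre-stabilized system yields $J_L^\ast(x_{t+n})\leq\gamma J_L^\ast(x_t)+c\,w_\mathrm{max}$ with $\gamma\in(0,1)$, giving practical exponential stability with asymptotic set characterized by $\beta(\bar w_\mathrm{max})$.

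The genuinely delicate part, and the main obstacle, is the simultaneous selection of the admissible window $[\underline{\lambda}_\alpha,\overline{\lambda}_\alpha]\times[\underline{\lambda}_\sigma,\overline{\lambda}_\sigma]$ and the disturbance threshold $\bar w_\mathrm{max}$: the regularization weights must be large enough to upper-bound $\norm{\alpha^\ast(t)}_1$ and $\norm{\sigma^\ast(t)}_\infty$ by a constant times $\sqrt{J_L^\ast(x_t)}$ (needed for both (b) and (c) via Cauchy-Schwarz on the regularizer), yet small enough that the extra candidate regularization is absorbed into the Lyapunov decrease; simultaneously, the product $w_\mathrm{max}\,c_{pe}$ must be small enough that the coefficients \eqref{eq:tight_constants_1}--\eqref{eq:tight_constants_2} leave strictly positive room inside $\mathbb{U}$ and $\mathbb{X}$, ensuring forward-invariance of the sublevel set $\{J_L^\ast\leq V_\mathrm{ROA}\}$ under the $n$-step closed loop.
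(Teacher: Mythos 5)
Your proposal follows essentially the same route as the paper's proof: the shifted-tail-plus-deadbeat candidate with $\alpha'$ obtained from the pseudo-inverse $H_{u\hat x}^\dagger$ and $\sigma'$ absorbing the Hankel-data mismatch, verification of the tightened constraints through the recursion \eqref{eq:tight_constants_2}, closed-loop constraint satisfaction over the first $n$ applied steps via Lemma \ref{lem:pred_error_sf} together with \eqref{eq:tight_constants_1}, and the standard Lyapunov-decrease argument for practical exponential stability (which the paper itself delegates to the reference). The only minor difference is that the paper's candidate state keeps the nominal trajectory $\hat x^\ast$ started at $x_t$ for prediction steps $1,\dots,L-n$, so that the disturbance gap $\bar d_n$ between $x_{t+n}$ and $\hat x^\ast_{t+n}$ enters only at $k=0$, rather than re-propagating the nominal dynamics from $x_{t+n}$ as you suggest; both choices work, but the specific constants in \eqref{eq:tight_constants_2} are tailored to the former.
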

The proof of this result is similar to the one of Theorem 10 by Berberich et al.\cite{Berberich2020b}. The main differences lie in the fact that we consider process instead of measurement noise; moreover, we use a pre-stabilizing controller, and thus, include the input constraint tightening \eqref{eq:ocp_ic}, which also has to be satisfied by the considered candidate solution, into the OCP. Furthermore, note that (ii) only shows exponential convergence of $x_t$ to a neighborhood of $x=0$, however, it is possible to establish a suitable lower as well as an upper bound on $J_L^\ast(x_t)$ analogous to Lemma 1 by Berberich et al.\cite{Berberich2020a}, thus, resulting in practical exponential stability. For a detailed discussion on the influence of the parameters $\lambda_\alpha$, $\lambda_\sigma$, and $c_{pe}$ on the stability properties, we refer to the work of Berberich et al.\cite{Berberich2020a}. In short, the region of attraction increases for smaller disturbance bounds or better persistency of excitation of the input signal, the latter being expressed by a decrease of $c_{pe}$.
\begin{proof}
	To show (i), we construct a candidate solution for \eqref{eq:ocp_sf} at time $t+n$ and show that \eqref{eq:ocp_dyn}-\eqref{eq:ocp_ic} hold for this candidate. Therefore, we define the candidate solution over the first $L-n$ steps via the previously optimal input shifted by $n$ steps, i.e., $\bar \nu'_k(t+n) = \bar \nu^\ast_{k+n}(t)$ for $k = 0,\dots,L-n-1$. Moreover, the state candidate is chosen as
	\begin{equation}
		\bar x'_{[0,L-n]}(t+n) = 
		\begin{bmatrix}
			x_{t+n} \\
			\hat x^\ast_{[t+n+1,t+L]}
		\end{bmatrix}
	\end{equation}
	for the first $L-n+1$ steps, with $\hat x^\ast$ defined as in \eqref{eq:ol_nom_state}. Note that, $c_{\alpha,k}$ scales linearly with $w_\mathrm{max}$ and 
	\begin{equation}
		\norm{\alpha^\ast(t)}_1 \leq \sqrt{N-L+1} \norm{\alpha^\ast(t)}_2 \leq \sqrt{\left(N-L+1\right) \frac{V_\mathrm{ROA}}{\lambda_\alpha w_\mathrm{max}}},
	\end{equation}
	which implies that the first term on the right hand side of \eqref{eq:prediction_error_sf} becomes arbitrarily small for sufficiently small $w_\mathrm{max}$. Furthermore, the same holds true for the second term on the right hand side of \eqref{eq:prediction_error_sf}, since $c_{\sigma,k}$ is uniformly upper bounded for all $k$ and
	\begin{equation}
		\norm{\sigma^\ast (t)}_\infty \leq \norm{\sigma (t)}_2 \leq \sqrt{\frac{V_\mathrm{ROA}w_\mathrm{max}}{\lambda_\sigma}}.
	\end{equation}
	Due to this, Lemma \ref{lem:pred_error_sf}, and $\bar x^\ast_{L}(t) = 0$, the state $\bar x'_{L-n}(t+n) = \hat x^\ast_{t+L}$ becomes arbitrarily small for sufficiently small $w_\mathrm{max}$. Thus, for a sufficiently small $w_\mathrm{max}$, by controllability, there exists an input sequence $\bar \nu'_{[L-n,L-1]}(t+n)$ that steers $\bar x'_{[L-n,L]}(t+n)$ from $\hat x^\ast_{t+L}$ to $0$ in $n$ steps. Furthermore, we choose the candidate solution for $\alpha'(t+n)$ and $\sigma'(t+n)$ as
	\begin{equation}
		\alpha'(t+n) = H_{u\hat x}^\dagger \begin{bmatrix} \bar \nu'(t+n) \\ \hat x_{t+n}^\ast \end{bmatrix},
		\label{eq:alpha_candidate}
	\end{equation}
	and
	\begin{equation}
		\begin{split}
		\sigma'(t+n) & = H_{L+1}\left(x^d\right)\alpha'(t+n) - \bar x'(t+n), \\
		& = \begin{bmatrix} \hat d_{t+n} \\ H_{L}\left(d_{[1,N]}^d\right)\alpha'(t+n)\end{bmatrix},
		\end{split}
		\label{eq:sigma_candidate}
	\end{equation}	
	where $\hat d_{t+k}$ satisfies  $x_{t+k} = \hat x_{t+k}^\ast + \hat d_{t+k}$. Thus, the candidate solution satisfies \eqref{eq:ocp_dyn}-\eqref{eq:ocp_term}.
	
	It remains to be shown that also \eqref{eq:ocp_sc} and \eqref{eq:ocp_ic} hold for the candidate solution. In order to show the satisfaction of \eqref{eq:ocp_ic}, we note that
	\begin{equation}
		\begin{split}
			\norm{\alpha'(t+n)}_1 & \stackrel{\eqref{eq:alpha_candidate}}{\leq} c_{pe}\left(\norm{\bar \nu'(t+n)}_1 + \norm{\hat x_{t+n}^\ast}_1 \right)\\
			& \leq c_{pe}\left(\norm{\bar \nu'(t+n)}_1 + nx_\mathrm{max} + n\bar d_n\right)
		\end{split}
		\label{eq:alpha_bound}
	\end{equation}
	holds. Furthermore, it holds that
	\begin{equation}
		\begin{split}
		\norm{\sigma' (t+n)}_\infty & \stackrel{\eqref{eq:sigma_candidate}}{\leq} \bar d_{N-1}\norm{\alpha'(t+n)}_1 + \bar d_n, \\
		& \stackrel{\eqref{eq:alpha_bound}}{\leq} \bar d_{N-1}c_{pe}\left(\norm{\bar \nu'(t+n)}_1 + nx_\mathrm{max} + n\bar d_n\right) + \bar d_n.
		\end{split}
		\label{eq:sigma_bound}
	\end{equation}
	Using the same arguments as in Proposition 8 (Inequality (18)) by Berberich et al.\cite{Berberich2020b}, we can bound the norm of the candidate input by
	\begin{equation}
		\norm{\bar \nu'(t+n)}_1 \leq \norm{\bar \nu^\ast(t)}_1 + \Gamma c_{\alpha, L-1}\norm{\alpha^\ast(t)}_1 + \Gamma c_{\sigma, L-1}\norm{\sigma^\ast(t)}_\infty.
		\label{eq:input_bound}
	\end{equation}
	Since \eqref{eq:ocp_ic} holds for the optimal solution at time $t$ and due to $\norm{\bar \nu_k'(t+n)}_\infty = \norm{\bar \nu^\ast_{k+n}}_\infty$ for $k = 0,\dots,L-n-1$, we obtain
	\begin{equation}
		\norm{\bar \nu_k'(t+n)}_\infty \leq u_\mathrm{max} - \left(b_{u,k+n}\norm{\bar \nu^\ast(t)}_1 + b_{\alpha,k+n}\norm{\alpha^\ast(t)}_1 + b_{\sigma,k+n}\norm{\sigma^\ast(t)}_\infty + b_{c,k+n} + \norm{K\bar x^\ast_{k+n}(t)}_\infty\right).
		\label{eq:rf_1}
	\end{equation}
	Plugging \eqref{eq:input_bound} and the coefficients in \eqref{eq:tight_constants_2} into \eqref{eq:rf_1} yields for $k=0,\dots,L-n-1$
	\begin{align}
		\begin{split}
			\norm{\bar \nu_k'(t+n)}_\infty & \stackrel{\eqref{eq:input_bound}}{\leq} u_\mathrm{max} - b_{u,k}\norm{\bar \nu'(t+n)}_1 - b_{\alpha,k}\left(c_{pe}\norm{\bar \nu'(t+n)}_1 + c_{pe}(nx_\mathrm{max} + n\bar d_n)\right) \\
			& \quad - b_{\sigma,k}\left(\bar d_{N-1}c_{pe}\norm{\bar \nu'(t+n)}_1 + \bar d_{N-1}c_{pe}(nx_\mathrm{max} + n\bar d_n) + \bar d_n \right) - b_{c,k}\\
			& \quad - \bar K c_{\alpha, k+n}\norm{\alpha^\ast(t)}_1 - \bar K c_{\sigma, k+n}\norm{\sigma^\ast(t)}_\infty - \norm{K\bar x^\ast_{k+n}(t)}_\infty - \bar K\bar d_n,
		\end{split} \\
		\begin{split}
			& \stackrel{\eqref{eq:alpha_bound},\eqref{eq:sigma_bound}}{\leq} u_\mathrm{max} - b_{u,k}\norm{\bar \nu'(t+n)}_1 - b_{\alpha,k}\norm{\alpha'(t+n)}_1 - b_{\sigma,k}\norm{\sigma'(t+n)}_\infty - b_{c,k} \\
			& \quad - \bar K c_{\alpha, k+n}\norm{\alpha^\ast(t)}_1 - \bar K c_{\sigma, k+n}\norm{\sigma^\ast(t)}_\infty  - \norm{K\bar x^\ast_{k+n}(t)}_\infty - \bar K\bar d_n, 
		\end{split} \\
		\begin{split}
			& \stackrel{\eqref{eq:prediction_error_sf}}{\leq} u_\mathrm{max} - b_{u,k}\norm{\bar \nu'(t+n)}_1 - b_{\alpha,k}\norm{\alpha'(t+n)}_1 - b_{\sigma,k}\norm{\sigma'(t+n)}_\infty - b_{c,k} \\
			& \quad - \bar K \norm{\hat x_{t+k+n}^\ast - \bar x^\ast_{k+n}(t)}_\infty - \norm{K\bar x^\ast_{k+n}(t)}_\infty - \bar K\bar d_n,
		\end{split} \\
		\begin{split}
			& \leq u_\mathrm{max} - b_{u,k}\norm{\bar \nu'(t+n)}_1 - b_{\alpha,k}\norm{\alpha'(t+n)}_1 - b_{\sigma,k}\norm{\sigma'(t+n)}_\infty - b_{c,k} - \norm{K\bar x'_k(t+n)}_\infty,
		\end{split}
	\end{align}
	where the last inequality holds due to $\bar x_k'(t+n) = \hat x_{t+n+k}^\ast$ for $k = 1,\dots,L-n-1$, and
	\begin{equation}
		\norm{\bar x'_0(t+n) - \hat x_{t+n}^\ast}_\infty = \norm{x_{t+n} - \hat x_{t+n}^\ast}_\infty \leq \bar d_n. 
	\end{equation}
	Therefore, the candidate solution satisfies \eqref{eq:ocp_ic} for $k=0,\dots, L-n-1$.	Showing that also \eqref{eq:ocp_sc} holds for $k=0,\dots, L-n-1$ can be done following the analogous steps as above.
	
	To show that \eqref{eq:ocp_sc} and \eqref{eq:ocp_ic} are also satisfied for $k=L-n,\dots,L-1$, we recall from above that $\bar x'_{L-n}(t+n)$ becomes arbitrarily small for sufficiently small $w_\mathrm{max}$. Thus, due to controllability and \eqref{eq:controllability} also $\bar \nu_{[L-n,L-1]} '(t+n)$ and, therefore, $\bar x_{[L-n+1,L-1]}'(t+n)$ become arbitrarily small. Moreover, due to \eqref{eq:const_perror} and \eqref{eq:tight_constants_2} the coefficients $a_{\alpha,k}$ and $b_{\alpha,k}$ depend linearly on $w_\mathrm{max}$, and thus, they also become arbitrarily small for sufficiently small disturbance bounds $w_\mathrm{max}$. Moreover, the same holds for $a_{u,k}$, $a_{c,k}$, $b_{u,k}$, and $b_{c,k}$. The coefficients $a_{\sigma,k}$, $b_{\sigma,k}$ converge to constant values for $w_\mathrm{max} \to 0$.Finally, due to \eqref{eq:sigma_bound}, $\sigma'(t+n)$ becomes arbitrarily small if $w_\mathrm{max}$ is sufficiently small. Hence, all terms on the left-hand side of \eqref{eq:ocp_sc} and \eqref{eq:ocp_ic} (except for $\bar{x}_k(t)$ and $\bar{v}_k(t)$, respectively) become arbitrary small. Therefore, since $x_\mathrm{max},\ u_\mathrm{max} > 0$, there exists a sufficiently small bound $\bar w_\mathrm{max} > 0$ such that \eqref{eq:ocp_sc} and \eqref{eq:ocp_ic} are also satisfied for $k=L-n,\dots,L-1$. 
	
	To show (ii), we can follow the same arguments as in Theorem 3 by Berberich et al.\cite{Berberich2020a} to conclude invariance of the sublevel set $J_L^\ast(x_t) \leq V_\mathrm{ROA}$ and exponential convergence of $J_L^\ast(x_t)$ to $J_L^\ast(x_t) \leq \beta (\bar w_\mathrm{max})$. This is possible since the candidate solution used in the first part of our proof is constructed analogous to the one used in the proof of the above reference. Thus, the closed-loop scheme is recursively feasible. Closed-loop state constraint satisfaction follows from the same arguments used in the proof of Theorem 10 by Berberich et al.\cite{Berberich2020b}. In order to show closed-loop input constraint satisfaction, we note that the optimal solution at time $t$ satisfies \eqref{eq:ocp_ic} for the first $n$ steps, i.e.,
	\begin{align}
		u_\mathrm{max} & \geq \norm{\bar \nu^\ast_k(t)}_\infty + b_{u,k}\norm{\bar \nu^\ast(t)}_1 + b_{\alpha,k}\norm{\alpha^\ast(t)}_1 + b_{\sigma,k}\norm{\sigma^\ast(t)}_\infty + b_{c,k} + \norm{K \bar x^\ast_k(t)}_\infty \\
		& = \norm{\bar \nu^\ast_k(t)}_\infty + \bar K c_{\alpha,k}\norm{\alpha^\ast(t)}_1 + \bar K c_{\sigma,k} \norm{\sigma^\ast(t)}_\infty + \bar K\bar d_k + \norm{K\bar x_k^\ast(t)}_\infty,
	\end{align}
	for $k=0,\dots,n-1$, where the second inequality follows from plugging in the coefficients from \eqref{eq:tight_constants_1}. With \eqref{eq:prediction_error_sf}, we finally obtain for $k=0,\dots,n-1$
	\begin{align*}
		u_\mathrm{max} & \geq \norm{\bar \nu^\ast_k(t)}_\infty + \bar K \norm{\hat x_{t+k}^\ast - \bar x^\ast_k(t)}_\infty + \bar K\bar d_k + \norm{K\bar x_k^\ast(t)}_\infty \\
		& \geq \norm{\bar \nu^\ast_k(t)}_\infty + \bar K \norm{\hat x_{t+k}^\ast - \bar x^\ast_k(t)}_\infty + \bar K\norm{x_{t+k} - \hat x_{t+k}^\ast}_\infty + \norm{K\bar x_k^\ast(t)}_\infty \\
		& \geq \norm{\bar \nu^\ast_k(t) + Kx_{t+k}}_\infty,
	\end{align*}
	where the second inequality holds due to $\norm{x_{t+k} - \hat x^\ast_{t+k}}_\infty \leq \bar d_k$ and the third inequality due to the triangle inequality. Thus, the input constraints are satisfied in closed loop.
\end{proof}

\subsection{Data-driven Estimation of System Constants}\label{ssec:system_constants}

In order to set up the constraint tightening, we need to compute the coefficients \eqref{eq:tight_constants_1} and \eqref{eq:tight_constants_2}. These depend on several system constants. While $\bar K$, $x_\mathrm{max}$, and $n$ are known a priori, the system constants $\rho_{A,k}$, $\bar d_k$, $c_{pe}$ and $\Gamma$ have to be estimated from data. In the following, we provide corresponding estimation procedures.

An approach for the computation of $\Gamma$ was shown by Berberich et al.\cite{Berberich2020b}, where, however, the availability of the undisturbed data is assumed. To extend this approach for the approximation of $\Gamma$ in the presence of disturbances in the data, we adapt the respective optimization problem by including a slack variable as well as a regularization, similar to \eqref{eq:ocp_sf}. Thus, we set up the optimization problem
\begin{subequations}
\label{eq:gamma}
	\begin{alignat}{4}
	&&\max_{x_0}\min_{\substack{\alpha, \sigma, \\ \bar{\nu}, \bar{x} }}&\quad \norm{\bar\nu_{[0,n-1]}}_1 + \lambda_\alpha'w_\mathrm{max}\norm{\alpha}_2^2 + \frac{\lambda_\sigma'}{w_\mathrm{max}}\norm{\sigma}_2^2, \\
	&&\text{s.t.}& \quad \norm{\bar x_0}_\infty \leq x_\mathrm{max}, \\
	&&& \quad\bar x_n = 0, \\
	&&& \quad\begin{bmatrix} \bar{\nu}\\ \bar{x} + \sigma \end{bmatrix} = \begin{bmatrix}H_{n}(\nu^d) \\ H_{n+1}(x^d) \end{bmatrix}\alpha,
	\end{alignat}
\end{subequations}
which can be solved by solving the inner minimization problem for all vertices of $\mathbb{X}$. Then, we choose $\Gamma \approx \frac{\norm{\bar\nu^\ast_{[0,n-1]}}_1}{x_\mathrm{max}}$, where $\bar\nu^\ast_{[0,n-1]}$ is the optimal solution of \eqref{eq:gamma}. Moreover, we approximate $c_{pe} \approx  \norm{H_{ux}^\dagger}_1$, with
\begin{equation}
	H_{ux} = \begin{bmatrix} H_L\left(\nu^d\right) \\ H_1\left(x^d_{[0,N-L-1]}\right)\end{bmatrix}.
\end{equation}
This approximation is possible, as for small disturbances also the error between $H_{ux}$ and $H_{u\hat x}$ is small. So far, both procedures mentioned above only yield approximations of the real constants $\Gamma$ and $c_{pe}$, without being guaranteed overapproximations of these constants. However, as was also confirmed in our numerical examples, the error between the real values and our estimates remains small for the considered disturbance level. Obtaining guaranteed overapproximations of the constants $\Gamma$ and $c_{pe}$ even in the presence of noise is an interesting subject for future research.

In order to estimate the overapproximations $\rho_{A,k} \geq \norm{A_K^k}_\infty$ and $\bar d_k \geq \sum_{i=0}^{k-1} \norm{A_K^{k-1-i}}_\infty w_\mathrm{max}$ for $i=0,\dots, L-1$ we use Algorithm 1 by Wildhagen et al.\cite{Wildhagen2021}, which makes use of the S-lemma\cite{VanWaarde2020b}. Note that the setup in the aforementioned reference considers a bound on the $2$-norm of the disturbance. However, by noting that $\norm{w}_2 \leq \sqrt{n} \norm{w}_\infty$, we can easily adapt the algorithm to our setting.

\section{Data-driven output-feedback predictive control}\label{sec:output_feedback}

In this section, we construct a robust data-driven predictive control scheme, in case only output measurements are available. First, in Subsection \ref{ssec:scheme_of} we set up the MPC scheme and prove similar theoretical properties to the ones shown for the state-feedback case. Thereafter, in Subsection \ref{ssec:system_constants_of} we show how the coefficients used for the constraint tightening can be computed only from data. 

\subsection{Proposed MPC scheme and theoretical guarantees}\label{ssec:scheme_of}

In contrast to the previous section, we consider the case where no full state measurements are available, but only output measurements. To this end, we consider the difference operator form
\begin{equation}
	y_k = -A_n y_{k-1} - \ldots - A_2 y_{k-n+1} - A_1 y_{k-n} + D u_k + B_n u_{k-1} + \ldots + B_2 u_{k-n+1} + B_1 u_{k-n} + \tilde{w}_k,
	\label{eq:io_charact}
\end{equation}
with the process disturbance $\tilde{w}_k\in \mathbb{\tilde W} = \left\{ \tilde w\in\mathbb{R}^p \mid \norm{\tilde w}_\infty \leq \tilde w_\mathrm{max} \right\}$. The input and output constraints are given by $u_t\in \mathbb{U} = \left\{ u\in\mathbb{R}^m \mid \norm{u}_\infty \leq u_\mathrm{max} \right\}$ and  $y_t\in \mathbb{Y} = \left\{ y\in\mathbb{R}^p \mid \norm{y}_\infty \leq y_\mathrm{max} \right\}$ for some $u_\mathrm{max}>0$, $y_\mathrm{max}>0$, similar to the setup in Section \ref{sec:state_feedback}. Note that the following results also hold if only an upper bound on the system order is known, in which case $n$ needs to be replaced by this upper bound. Furthermore, note that \eqref{eq:io_charact} is an equivalent characterization of the input-output behavior of \eqref{eq:lti_sys}, with $C\neq I$ or $D\neq 0$. Moreover, we can transform \eqref{eq:io_charact} into the non-minimal realization
\begin{equation}
	\begin{split}
		\xi_{k+1} & = \tilde{A} \xi_k + \tilde{B}u_k + \tilde{E}\tilde{w}_k, \\
		\tilde y_k & = \tilde{C}\xi_k + D u_k + \tilde{w}_k,
	\end{split}
	\label{eq:extended_dyn}
\end{equation}
with the extended state $\xi_k = \begin{bmatrix} u_{k-n}^\top & \dots & u_{k-1}^\top & y_{k-n}^\top & \dots & y_{k-1}^\top\end{bmatrix}^\top$, cf. \cite{Goodwin2014}. Similar to the state feedback case in Section \ref{sec:state_feedback}, we want to make use of a pre-stabilizing controller in case of an unstable system. Therefore, we introduce the control law
\begin{equation}
	u_k = \tilde K\xi_k + \nu_k,
	\label{eq:pre-stab_of}
\end{equation}
where the stabilizing feedback matrix $\tilde K$ can be computed purely from data, e.g., following the approach by Berberich et al.\cite{Berberich2020d}. However, for simplicity it is assumed, that such a pre-stabilizing controller is known a priori. Thus, $\nu_k$ is the input to the stabilized system
\begin{equation}
	\begin{split}
		\xi_{k+1} & = \tilde{A}_K \xi_k + \tilde{B}\nu_k + \tilde{E}\tilde{w}_k, \\
		y_k & = \tilde{C}_K\xi_k + D \nu_k + \tilde{w}_k.
	\end{split}
	\label{eq:stab_sys_of}
\end{equation}
Similar to the previous section, we define the disturbance propagated $k$ steps through the system dynamics as
\begin{equation}
	\delta_k \coloneqq  \sum_{i=0}^{k-1} \tilde A^{k-1-i}_K \tilde w_i,
	\label{eq:prop_disturbance_of}
\end{equation}
and an upper bound on its $\infty$-norm as
\begin{equation}
	\bar \delta_k = \sum_{i=0}^{k-1} \eta_A^{k-1-i}\tilde w_\mathrm{max} \geq \sum_{i=0}^{k-1} \norm{\tilde A^{k-1-i}_K}_\infty \tilde w_\mathrm{max} \geq \norm{\delta_k}_\infty,
	\label{eq:disturbance_bound_of}
\end{equation}
with $\eta_A \geq \norm{\tilde A_K}_\infty$.

Again, we apply a p.e. input sequence $\{\nu_k^d\}_{k=0}^{N-1}$ of length $N$ to System \eqref{eq:stab_sys_of}, and measure the associated disturbed output sequence $\{y^d_k\}_{k=0}^{N-1}$. As the OCP, introduced in the following, now contains $n$ additional steps to fix the initial state, the following assumption is needed.
\begin{assumption}
	The input sequence $\{\nu_k^d\}_{k=0}^{N-1}$ is persistently exciting of order $L+2n$.
	\label{as:pe_of}
\end{assumption}
Using these a priori collected data sequences as well as the constants $\eta_A$ from above, $\eta_B \geq \norm{\tilde B}_\infty$, $\eta_C \geq \norm{\tilde C_K}_\infty$, $\eta_D \geq \norm{D}_\infty$, we set up the OCP for the output-feedback predictive control problem as
\begin{subequations} \label{eq:ocp_of}
	\begin{alignat}{4}
		&&& J_{L}^\ast(\bar \nu_{[0,n-1]}(t-n),y_{\left[t-n,t-1\right]}) =\\
		&& \min_{\substack{\alpha(t), \sigma(t), \\ \bar{\nu}(t), \bar{y}(t) }} & \sum_{k=0}^{L-1} \left(\norm{\bar{\nu}_k(t)}_R^2 + \norm{\bar{y}_k(t)}_Q^2 \right) + \lambda_\alpha \tilde{w}_\mathrm{max} \norm{\alpha(t)}_2^2 + \frac{\lambda_\sigma}{\tilde{w}_\mathrm{max}} \norm{\sigma(t)}_2^2 \\
		&& \text{s.t.} & \begin{bmatrix} \bar{\nu}(t)\\ \bar{y}(t) + \sigma(t) \end{bmatrix} = \begin{bmatrix}H_{L+n}(\nu^d) \\ H_{L+n}(y^d) \end{bmatrix}\alpha(t), \label{eq:ocp_dyn_of} \\
		&&& \begin{bmatrix} \bar{\nu}_{\left[-n,-1\right]}(t) \\ \bar{y}_{\left[-n,-1\right]}(t) \end{bmatrix} = \begin{bmatrix} \bar{\nu}^\ast_{\left[0,n-1\right]}(t-n) \\ y_{\left[t-n,t-1\right]} \end{bmatrix}, \label{eq:ocp_init_of} \\
		&&& \begin{bmatrix} \bar{\nu}_{\left[L-n,L-1\right]}(t) \\ \bar{y}_{\left[L-n,L-1\right]}(t) \end{bmatrix} = \begin{bmatrix} 0 \\ 0 \end{bmatrix}, \label{eq:ocp_term_of} \\
		&&&\norm{\tilde K}_\infty \eta_{A}^k\norm{\xi_t}_\infty + \norm{\tilde K}_\infty \sum_{i=0}^{k-1}\eta_{A}^{k-1-i}\eta_{B}\norm{\bar \nu_i (t)}_\infty + \norm{\bar \nu_k (t)}_\infty + \norm{\tilde K}_\infty \bar \delta_k \leq u_\mathrm{max}, \label{eq:ocp_ic_of} \\
		&&&\eta_{C}\eta_{A}^k\norm{\xi_t}_\infty + \eta_{C}\sum_{i=0}^{k-1}\eta_{A}^{k-1-i}\eta_{B}\norm{\bar \nu_i (t)}_\infty + \eta_{D}\norm{\bar \nu_k (t)}_\infty + \eta_{C}\bar \delta_k \leq y_\mathrm{max}, \label{eq:ocp_oc_of} \\
		&&&\forall k = 0,\dots,L-n-1.
	\end{alignat}
\end{subequations}
Problem \eqref{eq:ocp_of} is similar to \eqref{eq:ocp_sf} in the state-feedback case. The key difference is the new input and output constraint tightening in \eqref{eq:ocp_ic_of} and \eqref{eq:ocp_oc_of}. These constraints are now independent of $\alpha$ and $\sigma$. Instead, the tightening only involves the extended state at time $t$ and the input variables as well as the above defined constants. However, this comes at the price of potential conservatism, since \eqref{eq:ocp_ic_of} and \eqref{eq:ocp_oc_of} involve terms $\eta_A^k$, which is in general larger than terms of the form $\norm{\tilde{A}_K^k}_\infty$ that have been used in the constraint tightening of Section \ref{sec:state_feedback}. Setting up a constraint tightening similar to the one of \eqref{eq:ocp_sf} remains an interesting issue for future research. Moreover, the initial constraint \eqref{eq:ocp_init_of} and the terminal constraint \eqref{eq:ocp_term_of} hold over $n$ steps. This implies that the internal state of the underlying minimal realization corresponding to the prediction coincides with the initial state and the terminal state, respectively (compare the work of Markovsky and Rapisarda\cite{Markovsky2008}). Note that the constants in \eqref{eq:ocp_ic_of} and \eqref{eq:ocp_oc_of} can be computed purely from data following the approach that will be discussed in the next subsection.

We, again, close the loop by applying the optimal solution of \eqref{eq:ocp_of} in an $n$-step manner, i.e., $u_{t+k} = \tilde K\xi_{t+k} + \bar \nu_k^\ast (t)$ for $k=0,\dots,n-1$, where $\bar \nu_k^\ast(t)$ is the optimal solution of \eqref{eq:ocp_of} for the prediction step $k$. Note that $\xi_{t+k}$ contains the inputs $u_{[t+k-n,t+k-1]}$ and the measured outputs $y_{[t+k-n,t+k-1]}$. We are now ready to state practical exponential stability, and input as well as output constraint satisfaction of the closed loop. To this end, following the approach by Berberich et al.\cite{Berberich2020a}, we now consider the Lyapunov function
\begin{equation}
	V_t \coloneqq J_{L}^\ast(u_{\left[t-n,t-1\right]},y_{\left[t-n,t-1\right]}) + \gamma W(\xi_t),
\end{equation}
for some $\gamma > 0$, where $W(\xi)$ is an IOSS Lyapunov function, which exists due to detectability of $(A,C)$\cite{Cai2008}.
\begin{theorem}\label{th:of}
	Suppose that Assumption \ref{as:pe_of} holds. Then, for any $V_\mathrm{ROA} > 0$, there exist $\underline{\lambda}_\alpha$, $\overline{\lambda}_\alpha$, $\underline{\lambda}_\sigma$, $\overline{\lambda}_\sigma$ such that for all $\lambda_\alpha$, $\lambda_\sigma$ satisfying
	\begin{equation}
		\underline{\lambda}_\alpha \leq \lambda_\alpha \leq \overline{\lambda}_\alpha, \quad \underline{\lambda}_\sigma \leq \lambda_\sigma \leq \overline{\lambda}_\sigma,
	\end{equation}
	there exist $\bar{ \tilde w}_\mathrm{max}$, $\bar c_{pe} >0$ as well as a continuous, strictly increasing function $\beta : [0,\bar{ \tilde w}_\mathrm{max}] \to [0, V_\mathrm{ROA}]$ with $\beta(0) = 0$, such that for all $\tilde w_\mathrm{max}$ and $c_{pe}$ satisfying
	\begin{equation}
		\tilde w_\mathrm{max} \leq \min{\left\{\bar{ \tilde w}_\mathrm{max},\; \frac{\bar c_{pe}}{c_{pe}}\right\}}
	\end{equation}
	the following holds for the closed loop resulting from the application of the $n$-step MPC scheme:
	\begin{enumerate}
		\item[(i)] If $V_0 \leq V_\mathrm{ROA}$, then OCP \eqref{eq:ocp_of} is feasible at any time $t\geq 0$.
		\item[(ii)] For any initial condition satisfying $V_0 \leq V_\mathrm{ROA}$ it holds that $y_t\in\mathbb{Y}$ and $u_t\in\mathbb{U}$ for all $t\geq 0$, and $V_t$ converges exponentially to $V_t \leq \beta (\bar{ \tilde w}_\mathrm{max})$.
	\end{enumerate}
\end{theorem}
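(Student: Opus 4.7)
The plan is to mirror the proof of Theorem~\ref{th:sf}, adapting the candidate construction and Lyapunov analysis to the non-minimal extended realization \eqref{eq:extended_dyn}. For recursive feasibility (i), I would build a candidate solution at time $t+n$ by setting $\bar\nu'_{[-n,-1]}(t+n),\,\bar y'_{[-n,-1]}(t+n)$ equal to $u_{[t,t+n-1]},\,y_{[t,t+n-1]}$ (which on the input side coincides with $\bar\nu^\ast_{[0,n-1]}(t)$ by the $n$-step receding-horizon application), and by shifting the previous optimum over the free horizon, $\bar\nu'_k(t+n)=\bar\nu^\ast_{k+n}(t)$ for $k=0,\dots,L-2n-1$. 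For the last $n$ free steps I would invoke controllability of the pre-stabilized minimal realization to steer its internal state toward zero by step $L-n$, which combined with $n$ subsequent zero inputs and observability of $(A,C)$ makes the terminal constraint \eqref{eq:ocp_term_of} satisfiable. The slack $\sigma'(t+n)$ and multiplier $\alpha'(t+n)$ would be defined in direct analogy to \eqref{eq:alpha_candidate}--\eqref{eq:sigma_candidate}, absorbing the cumulative process-disturbance contribution so that \eqref{eq:ocp_dyn_of} is consistent.

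Verifying the tightened constraints \eqref{eq:ocp_ic_of}, \eqref{eq:ocp_oc_of} for the candidate is in fact somewhat easier than in Theorem~\ref{th:sf}, because the tightening no longer depends on $\alpha$ or $\sigma$. Writing $\xi_{t+n}=\tilde A_K^n\xi_t+\sum_{i=0}^{n-1}\tilde A_K^{n-1-i}\tilde B\bar\nu^\ast_i(t)+\delta_n$ and using $\norm{\tilde A_K^k}_\infty\le\eta_A^k$, $\norm{\tilde B}_\infty\le\eta_B$, $\norm{\delta_n}_\infty\le\bar\delta_n$, the candidate's inequalities at time $t+n$ for $k=0,\dots,L-2n-1$ can be derived directly from the optimal solution's inequalities at time $t$ for $k=n,\dots,L-n-1$. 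For the final $n$ free steps, the same controllability-plus-small-disturbance argument as in the proof of Theorem~\ref{th:sf} shows the steering inputs and resulting states can be made arbitrarily small by choosing $\tilde w_\mathrm{max}$ and $c_{pe}$ small enough. Closed-loop constraint satisfaction in (ii) is then immediate: evaluating \eqref{eq:ocp_ic_of} at $k=0,\dots,n-1$ bounds $\norm{u_{t+k}}_\infty=\norm{\tilde K\xi_{t+k}+\bar\nu^\ast_k(t)}_\infty\le u_\mathrm{max}$ after applying the triangle inequality and the propagation bound on $\xi_{t+k}$, and similarly \eqref{eq:ocp_oc_of} gives $\norm{y_{t+k}}_\infty\le y_\mathrm{max}$.

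For the stability claim in (ii) I would follow the IOSS-based argument of Berberich et al.~\cite{Berberich2020a}: since only the output appears in the stage cost, I would augment $J_L^\ast$ with $\gamma W(\xi_t)$ for a sufficiently small $\gamma>0$ to obtain a proper Lyapunov function on the extended state. Combining the standard descent estimate for $J_L^\ast$ derived from the candidate above with the IOSS decrease of $W$, and absorbing the $\tilde w_\mathrm{max}$-dependent mismatch into a residual that shrinks with $\tilde w_\mathrm{max}$, yields exponential convergence of $V_t$ to $\beta(\bar{\tilde w}_\mathrm{max})$. The hard part will be the simultaneous calibration of $\lambda_\alpha$, $\lambda_\sigma$, and $\tilde w_\mathrm{max}$ so that an analogue of the prediction-error bound \eqref{eq:prediction_error_sf} makes the terminal-steering states and inputs sufficiently small while the Lyapunov decrease still dominates the residual; a new subtlety is that the constants $\eta_A^k$ appearing in the tightening can be strictly larger than $\norm{\tilde A_K^k}_\infty$, so the admissible disturbance bound is likely more conservative than in the state-feedback case, and tracking this conservatism through the candidate bounds is where most of the bookkeeping effort will sit.
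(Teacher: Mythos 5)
Your proposal follows essentially the same route as the paper's proof: the shifted candidate $\bar\nu'_k(t+n)=\bar\nu^\ast_{k+n}(t)$ with an $n$-step steering segment and $n$ terminal zeros, the $\alpha'/\sigma'$ candidates built in analogy to \eqref{eq:alpha_candidate}--\eqref{eq:sigma_candidate}, the derivation of the candidate's tightened constraints at index $k$ from the optimal solution's constraints at index $k+n$ via $\xi_{t+n}=\tilde A_K^n\xi_t+\sum_{i=0}^{n-1}\tilde A_K^{n-1-i}\tilde B\bar\nu^\ast_i(t)+\delta_n$ and the bounds $\eta_A,\eta_B,\bar\delta_n$, and the IOSS-augmented Lyapunov function $V_t=J_L^\ast+\gamma W(\xi_t)$ for part (ii). No substantive differences or gaps relative to the paper's argument.
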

This result is similar to the state-feedback case (Theorem \ref{th:sf}). Also the proof works along the lines of the proof of Theorem \ref{th:sf}, where the candidate solution can be chosen analogously. The main difference lies in the modified constraint tightening. For a discussion on the role of the parameters in the above statement, we refer to the discussion below Theorem \ref{th:sf}.
\begin{proof}
	The proof is analogous to the proof of Theorem \ref{th:sf}. The only difference lies in the constraint tightening \eqref{eq:ocp_ic_of}, \eqref{eq:ocp_oc_of}. To show (i), we note that
	\begin{equation}
		\bar \nu'(t+n) = 
		\begin{bmatrix}
			\bar \nu^\ast_{[0,L-n-1]}(t) \\
			\bar \nu'_{[L-2n, L-n-1]}(t+n) \\
			0_n
		\end{bmatrix}
	\end{equation}
	and
	\begin{equation}
	\bar y'(t+n) = 
		\begin{bmatrix}
			y_{[t-n,t-1]} \\
			\hat y_{[t+n,t+L-n-1]}^\ast \\
			\bar y'_{[L-2n, L-n-1]}(t+n) \\
			0_n
		\end{bmatrix}
	\end{equation}
	are an input and output candidate solution. Here, $\hat y^\ast$ is the undisturbed output starting at $x_t$, resulting from an open-loop application of $\bar \nu^\ast(t)$; furthermore, $\bar \nu'_{[L-2n, L-n-1]}(t+n)$ is the input steering the system to the origin in $n$ steps and $\bar y'_{[L-2n, L-n-1]}(t+n)$ is the associated output. Note that, analogous to the proof of Theorem \ref{th:sf}, such an input exists due to controllability of the system. Moreover, we choose the candidate
	\begin{equation}
		\alpha'(t+n) = H_{u\hat x}^{y\dagger} \begin{bmatrix} \bar \nu'(t+n) \\ x_t\end{bmatrix},
	\end{equation}
	with 
	\begin{equation}	
		H_{u\hat x}^y = \begin{bmatrix} H_{L+n}\left(\nu^d\right) \\ H_1\left(\hat x^d_{[0,N-L-n]}\right) \end{bmatrix}.
	\end{equation}
	Furthermore, as a candidate for the slack variable, we choose
	\begin{equation}
		\sigma'(t+n) = H_{L+n}\left(y^d\right)\alpha'(t+n) - \bar y'(t+n).
	\end{equation}
	Thus, the candidate solution satisfies \eqref{eq:ocp_dyn_of}-\eqref{eq:ocp_term_of}. To show that \eqref{eq:ocp_ic_of} is also satisfied for the candidate solution, we note that $\bar \nu'_k(t+n) = \bar \nu^\ast_{k+n}(t)$ holds for $k=0,\dots,L-2n-1$. Thus, it holds for these $k$ that
	\begin{align}
		u_\mathrm{max} &\geq \norm{\tilde K}_\infty \eta_A^{k+n}\norm{\xi_t}_\infty + \norm{\tilde K}_\infty\sum_{i=0}^{k+n-1}\eta_A^{k+n-1-i}\eta_B\norm{\bar \nu_i^\ast (t)}_\infty + \norm{\bar \nu_{k+n}^\ast (t)}_\infty + \norm{\tilde K}_\infty\bar \delta_{k+n} , \\
		\begin{split}
		&\geq \norm{\tilde K}_\infty\eta_A^k\left( \eta_A^n\norm{\xi_t}_\infty + \sum_{i=0}^{n-1}\eta_A^{n-1-i}\eta_B\norm{\bar \nu_i^\ast (t)}_\infty + \bar \delta_{n} \right) + \norm{\tilde K}_\infty\sum_{i=0}^{k-1}\eta_A^{k-1-i}\eta_B\norm{\bar \nu_i' (t+n)}_\infty \\
		& \quad + \norm{\bar \nu_k' (t+n)}_\infty + \norm{\tilde K}_\infty\bar\delta_k, 
		\end{split} \\
		&\geq \norm{\tilde K}_\infty\eta_A^k\norm{\xi_{t+n}}_\infty + \norm{\tilde K}_\infty\sum_{i=0}^{k-1}\eta_A^{k-1-i}\eta_B\norm{\bar \nu_i' (t+n)}_\infty + \norm{\tilde K}_\infty\bar\delta_k + \norm{\bar \nu_k' (t+n)}_\infty,
	\end{align}
	where the last inequality holds due to $\xi_{t+n} = \tilde A_K^n + \sum_{i=0}^{n-1}\tilde A_K^{n-1-i}\tilde B\nu_i^\ast(t) + \sum_{i=0}^{n-1}\tilde A_K^{n-1-i}\tilde{w}_{t+i}$. Therefore, the candidate input satisfies \eqref{eq:ocp_ic_of} for $k=0,\dots,L-2n-1$. Showing that for these prediction steps \eqref{eq:ocp_oc_of} is satisfied by the candidate output works analogously. Moreover, showing that \eqref{eq:ocp_ic_of} and \eqref{eq:ocp_oc_of} are also satisfied for $k=L-2n,\dots,L-n-1$ can be done following analogous steps to the ones in the proof of Theorem \ref{th:sf}, by noting that due to the terminal condition and controllability there exists a sufficiently small $\tilde w_\mathrm{max}$ such that $\bar \nu'_{k}(t+n)$ and $\bar y'_{k}(t+n)$ become arbitrarily small, thus satisfying the tightened input and output constraint.
	
	To show (ii), we again follow the same arguments as in Theorem 3 by Berberich et al.\cite{Berberich2020a} to conclude practical exponential stability and recursive feasibility. It remains to show closed-loop input and output constraint satisfaction. Therefore, we note that
	\begin{align}
		u_\mathrm{max} & \geq \norm{\tilde K}_\infty \eta_A^{k}\norm{\xi_t}_\infty + \norm{\tilde K}_\infty\sum_{i=0}^{k-1}\eta_A^{k-1-i}\eta_B\norm{\bar \nu_i^\ast (t)}_\infty + \norm{\tilde K}_\infty\bar \delta_{k} + \norm{\bar \nu_{k}^\ast (t)}_\infty \\
		& \geq \norm{\tilde K\left(\tilde A_K^k\xi_t + \sum_{i=0}^{k-1} \tilde A_K^{k-1-i}\tilde B  \bar \nu_i^\ast (t) + \delta_{t+k} \right) + \bar \nu_k^\ast(t)}_\infty \\
		& = \norm{\tilde K\xi_{t+k} + \nu_{t+k}}_\infty
	\end{align}
	holds for $k=0,\dots,n-1$, which proves closed-loop input constraint satisfaction. Showing output constraint satisfaction works analogously.
\end{proof}

\subsection{Data-driven Estimation of System Constants}\label{ssec:system_constants_of}

In the following, we provide data-based procedures to compute (overapproximations of) the coefficients $\eta_A$, $\eta_B$, $\eta_C$, and $\eta_D$. First, we note that $\eta_A $ can be computed based on\cite{Wildhagen2021}, analogously to $\eta_{A,k}$ in Subsection \ref{ssec:system_constants}. The coefficients $\eta_B$, $\eta_C$, and $\eta_D$ can be computed similarly by modifying the approach from\cite{Wildhagen2021}.

Let us consider the data matrices
\begin{equation}
	\begin{split}
		X_+ &\coloneqq \begin{bmatrix} \xi^d_{n+1} & \xi^d_{n+2} & \dots & \xi^d_{N}\end{bmatrix}, \\
		X &\coloneqq \begin{bmatrix} \xi^d_{n} & \xi^d_{n+1} & \dots & \xi^d_{N-1}\end{bmatrix}, \\
		Y &\coloneqq \begin{bmatrix} y^d_{n} y^d_{n+1} & \dots & y^d_{N-1}\end{bmatrix}, \\
		U &\coloneqq \begin{bmatrix} \nu^d_{n} \nu^d_{n+1} & \dots & \nu^d_{N-1}\end{bmatrix},
	\end{split}
\end{equation}
where $\xi^d_k = \begin{bmatrix} u_{k-n}^{d\top} & \dots & u_{k-1}^{d\top} & y_{k-n}^{d\top} & \dots & y_{k-1}^{d\top}\end{bmatrix}^\top$ for $k=n,\dots,N$. We write $F\succeq 0$ if $F$ is a symmetric and positive semi-definite matrix. In order to compute overapproximations of the respective system constants, we solve the optimization problem
\begin{subequations}
\label{eq:estimation_eta}
	\begin{alignat}{4}
	&&\min_{\tau,\bar \sigma^2}&\quad \bar\sigma^2, \\
	&&\text{s.t.} &\quad P_1(\bar\sigma^2) - \tau P_2  \succeq 0, \\
	&&& \quad \bar\sigma^2\geq 0, \\
	&&& \quad \tau \geq 0,
	\end{alignat}
\end{subequations}
where $P_1$ and $P_2$ are placeholders, which have to be defined for the different coefficients as follows
\begin{align}
	\eta_A: \quad P_1(\bar\sigma^2) & = \begin{bmatrix} -I & 0 & 0 \\ 0 & 0 & 0 \\ 0 & 0 & \bar\sigma^2I \end{bmatrix}, &P_2 &= \begin{bmatrix} -\begin{bmatrix} X \\ U \end{bmatrix}^\top & X_+^\top \\ 0 & I \end{bmatrix}^\top \cdot\begin{bmatrix} -\tilde{E}\tilde{E}^\top & 0 \\ 0 & n\tilde w_\mathrm{max}^2N \end{bmatrix} \cdot \begin{bmatrix} -\begin{bmatrix} X \\ U \end{bmatrix}^\top & X_+^\top \\ 0 & I \end{bmatrix}, \\
	\eta_B: \quad P_1(\bar\sigma^2) & = \begin{bmatrix} 0 & 0 & 0 \\ 0 & -I & 0 \\ 0 & 0 & \bar\sigma^2I \end{bmatrix}, &P_2 & = \begin{bmatrix} -\begin{bmatrix} X \\ U \end{bmatrix}^\top & X_+^\top \\ 0 & I \end{bmatrix}^\top \cdot\begin{bmatrix} -\tilde{E}\tilde{E}^\top & 0 \\ 0 & n\tilde w_\mathrm{max}^2N \end{bmatrix} \cdot \begin{bmatrix} -\begin{bmatrix} X \\ U \end{bmatrix}^\top & X_+^\top \\ 0 & I \end{bmatrix}, \\
	\eta_C: \quad P_1(\bar\sigma^2) & = \begin{bmatrix} -I & 0 & 0 \\ 0 & 0 & 0 \\ 0 & 0 & \bar\sigma^2I \end{bmatrix}, &P_2 &= \begin{bmatrix} -\begin{bmatrix} X \\ U \end{bmatrix}^\top & Y^\top \\ 0 & I \end{bmatrix}^\top \cdot \begin{bmatrix} -I^\top & 0 \\ 0 & n\tilde w_\mathrm{max}^2N \end{bmatrix} \cdot \begin{bmatrix} -\begin{bmatrix} X \\ U \end{bmatrix}^\top & Y^\top \\ 0 & I \end{bmatrix}, \\
	\eta_D: \quad P_1(\bar\sigma^2) & = \begin{bmatrix} 0 & 0 & 0 \\ 0 & -I & 0 \\ 0 & 0 & \bar\sigma^2I \end{bmatrix}, &P_2 &= \begin{bmatrix} -\begin{bmatrix} X \\ U \end{bmatrix}^\top & Y^\top \\ 0 & I \end{bmatrix}^\top \cdot\begin{bmatrix} -I^\top & 0 \\ 0 & n\tilde w_\mathrm{max}^2N \end{bmatrix}\cdot \begin{bmatrix} -\begin{bmatrix} X \\ U \end{bmatrix}^\top & Y^\top \\ 0 & I \end{bmatrix}.
\end{align}

We denote the solutions of \eqref{eq:estimation_eta} as $\bar \sigma_A^2$, $\bar \sigma_B^2$, $\bar \sigma_C^2$, $\bar \sigma_D^2$ for the respective configuration of $P_1$ and $P_2$. It is straight forward to show that we obtain $\eta_A$, $\eta_B$, $\eta_C$, and $\eta_D$ via $\eta \leq \sqrt{n}\bar\sigma$. For a detailed discussion of the approach we refer to\cite{Wildhagen2021}.

\section{Numerical example}\label{sec:examples}

\begin{figure}
	\centering
	\begin{minipage}{.49\textwidth}
			\includegraphics[width=1\linewidth]{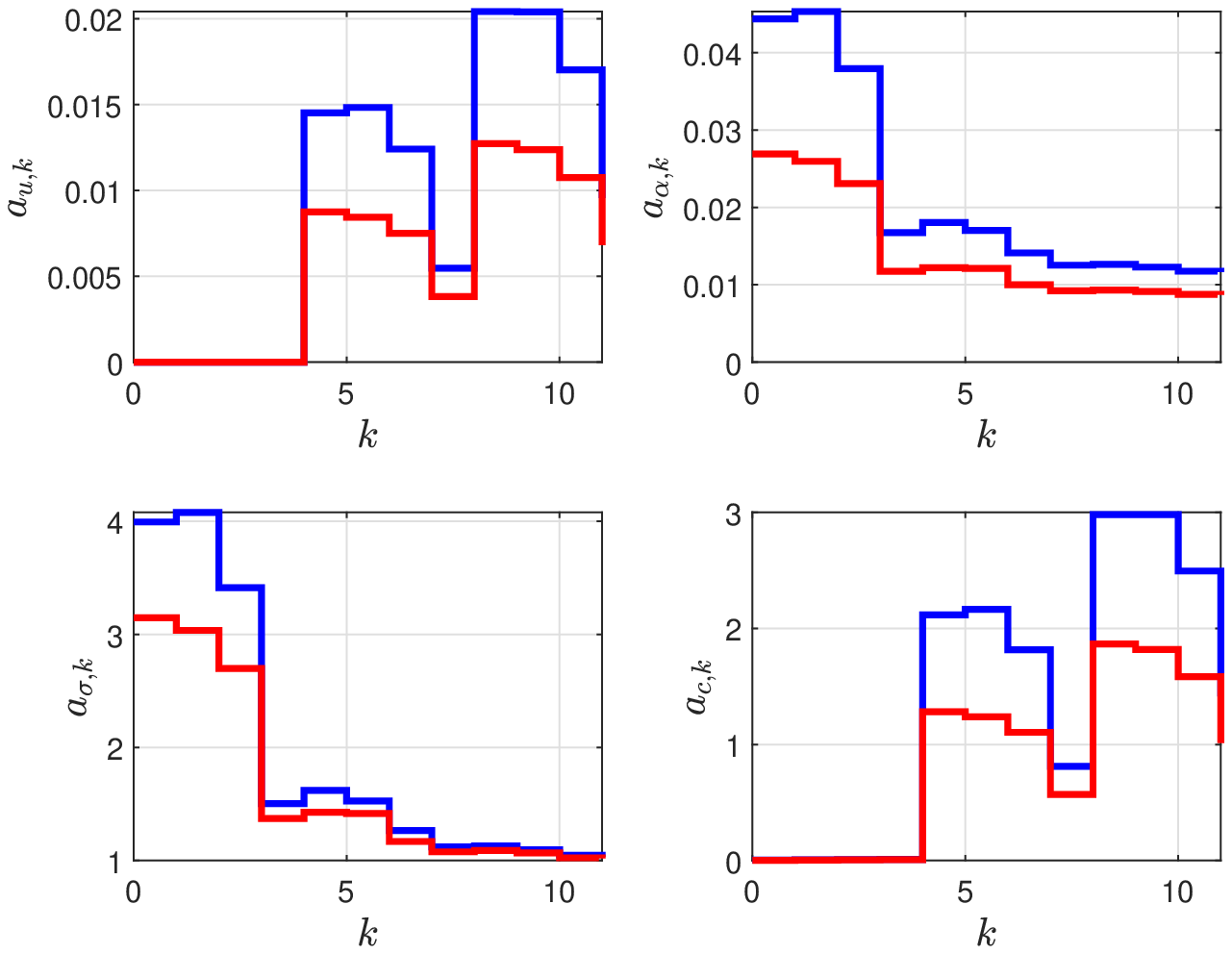}
	\end{minipage}
	\begin{minipage}{.49\textwidth}
	\centering
		\includegraphics[width=1\linewidth]{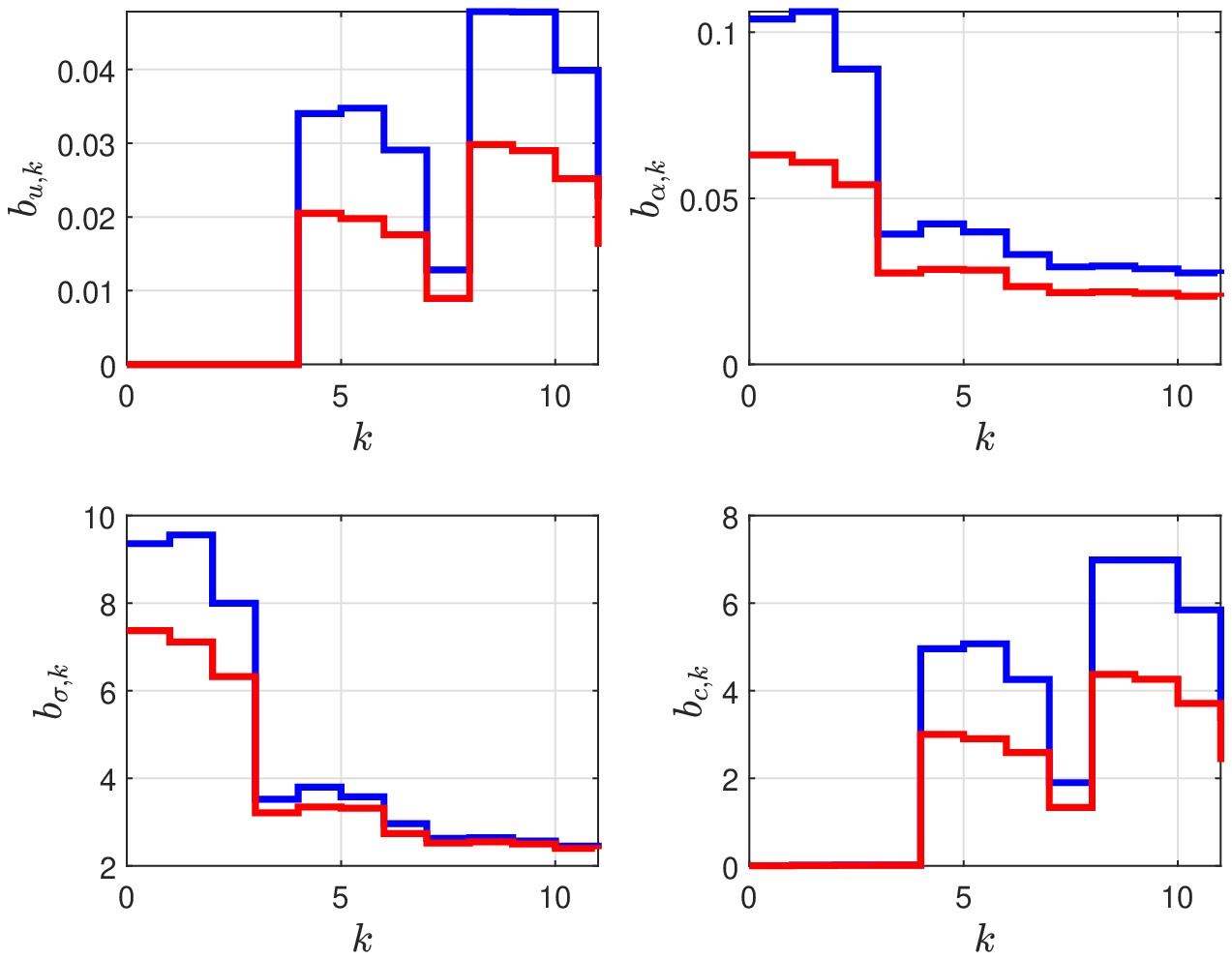}
	\end{minipage}
	\caption{Coefficients for the state and input constraint tightening \eqref{eq:ocp_sc}, \eqref{eq:ocp_ic}. The red lines correspond to the ideal coefficients that can be computed if perfect model knowledge is available. The blue lines correspond to the coefficients computed purely from data.}
	\label{fig:param}
\end{figure}

\begin{figure}
	\centering
		\includegraphics[width=.5\linewidth]{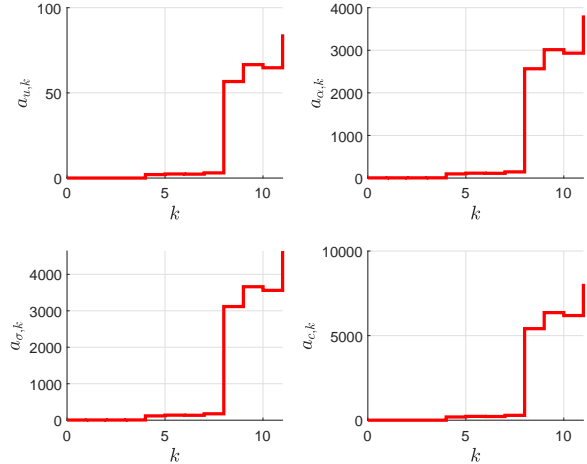}
	\caption{Ideal coefficients for the state constraint tightening \eqref{eq:ocp_sc} without pre-stabilizing controller}
	\label{fig:param_a_unstable}
\end{figure}

\begin{figure}
	\centering
		\includegraphics{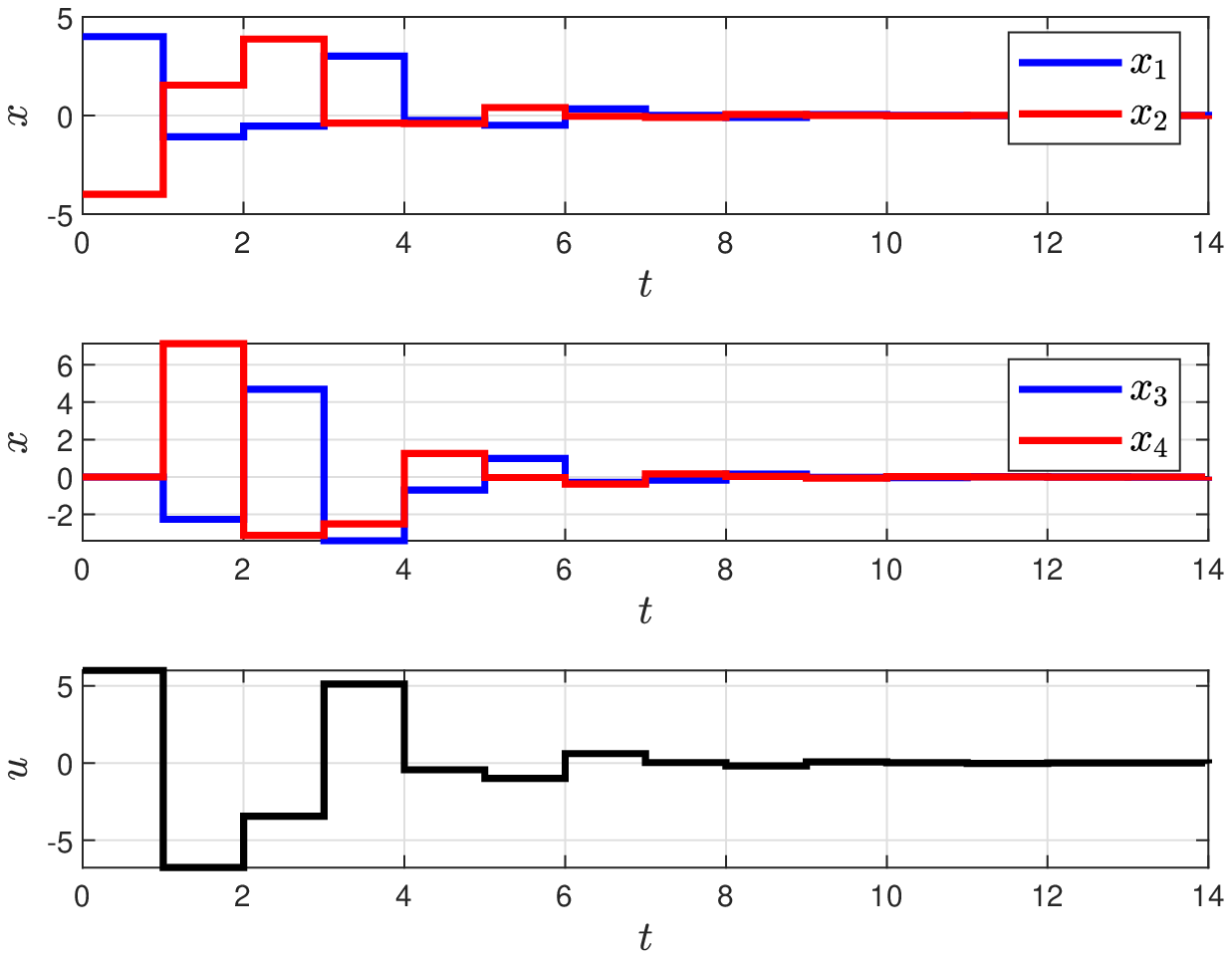}
	\caption{Simulation}
	\label{fig:sim}
\end{figure}

As an example, we consider the two mass-spring-system suggested by Wie et al.\cite{Wie1992}, with the masses $m_1 = 0.5\, \mathrm{kg}$, $m_2 = 1\, \mathrm{kg}$ and the spring constant $k = 2\, \frac{\mathrm{kg}}{\mathrm{s}^2}$. Discretizing the system with a sampling time of $1\, \mathrm{s}$ yields the matrices
\begin{equation}
	A = 
	\begin{bmatrix}
	 -0.1799 & 1.1799 & 0.507 & 0.493 \\
    0.59 & 0.41 & 0.2465 & 0.7535 \\
   -1.0421 & 1.0421 & -0.1799 & 1.1799 \\
    0.5211 & -0.5211 & 0.59 & 0.41 \\
	\end{bmatrix}, \quad B = 
	\begin{bmatrix} 
		0.7266 \\ 0.1367 \\ 1.014 \\ 0.493
	\end{bmatrix}.
\end{equation}
We assume that full state measurements are available and that $w_\mathrm{max} = 10^{-3}$, $u_\mathrm{max} = 10$, $x_\mathrm{max} = 10$ hold for the constraint sets, where the process disturbance $w_k$ acting on the system, during the data generation and in closed-loop operation, at time $k$ is sampled uniformly from $\mathbb{W}$. As the matrix $A$ has two eigenvalues on the unit disc, the usage of a pre-stabilizing controller is expected to be be advantageous in order to set up the MPC scheme introduced in Section \ref{sec:state_feedback}. First, we collect apply a PE input sequence of length $50$ to the open-loop system and measure the corresponding state sequence. Thereafter, we employ Theorem 1 by Berberich et al.\cite{Berberich2020d} to compute a robust linear quadratic regulator for the system based on the available noisy data (using diagonal multipliers to describe the disturbance bound, compare Equation (21) in\cite{Berberich2020d}). This yields the state-feedback gain 
\begin{equation*}
	K = \begin{bmatrix} 0.4345 & -0.8439 & -0.3665 & -0.6986 \end{bmatrix}
\end{equation*}
which serves as the pre-stabilizing controller, leading to all eigenvalues of $A_K = A + BK$ being located strictly inside the unit disc. 
 
The goal is to set up the OCP \eqref{eq:ocp_sf} with prediction horizon $L=12$, a total amount of $N=50$ data points (of the pre-stabilized system) in the Hankel matrices, and the parametrization $Q=I_n$, $R=1$, $\lambda_\alpha = \lambda_\sigma = 100$. To this end, we have to compute the constants \eqref{eq:tight_constants_1} and \eqref{eq:tight_constants_2} from data. In order to do so, we apply an input sequence of length $N'=5000$, which is uniformly sampled from $\mathbb{U}$, to the pre-stabilized system and collect the corresponding $N'$ state measurements. Using these data sequences, we compute $\rho_{A,k}$ for $k=0,\dots, L-1$ and $\bar d_k$ for $k=0,\dots,N-1$ following the approach mentioned in Subsection \ref{ssec:system_constants}. Note that the estimation of the system constants also works for a smaller amount of data, at the price of a more conservative overapproximation of these constants. However, for a good approximation of these constants, we need a much larger amount of data than for the prediction via the Hankel matrices (i.e., $N'\gg N$). We now choose an input-state-sequence of length $N$ from the collected data (of total length $N'$), for which the input sequence is persistently exciting of order $L+n+1$, to construct the Hankel matrices and approximate the constant $c_{pe}$ as described in Subsection \ref{ssec:system_constants}. Moreover, we apply the method from Subsection \ref{ssec:system_constants}, with $\lambda_\alpha'=\lambda_\sigma' = 1$, in order to compute an approximation of the controllability constant $\Gamma$. With these approximations of the system constants, we compute the coefficients in \eqref{eq:tight_constants_1} and \eqref{eq:tight_constants_2}. The resulting coefficients as well as the ideal coefficients that would be computable if perfect model knowledge was available, can be found in Figure \ref{fig:param}, where the red lines correspond to the ideal coefficients, and the blue lines to the coefficients computed from data. It can be seen that the coefficients computed from data yield over approximations of the "real" coefficients. This is mainly due to the fact that the procedure in Subsection \ref{ssec:system_constants} only yields overapproximations of the constants $\rho_{A,k}$ and $\bar d_k$. The approximation of $c_{pe}$ by its disturbed counterpart is accurate for the present example,compare the discussion in Subsection \ref{ssec:system_constants}.

Considering the parameter $b_{c,k}$, it can be seen that for larger $k$, this coefficient already is close to $u_\mathrm{max}$. Even though there is a non-monotonicity in $k$, which results from the usage of an $n$-step MPC scheme and the corresponding recursive definition of the constants in \eqref{eq:tight_constants_1} and \eqref{eq:tight_constants_2}, it is clearly visible that $b_{c,k}$ tends to increase for larger $k$. Thus, for larger prediction horizons or larger disturbance bounds $w_\mathrm{max}$, this parameter would lead to $b_{c,k} > u_\mathrm{max}$, which would render \eqref{eq:ocp_sf} infeasible due to \eqref{eq:ocp_ic}. The reason for this conservatism in the constraint tightening lies in the fact that submultiplicativity and the triangle inequality were exploited multiple times in the proof of recursive feasibility and constraint satisfaction. Moreover, the estimates for $\rho_{A,k}$ and $\bar d_k$ only yield overapproximations of the parameters and for the sake of recursive feasibility, $x_\mathrm{max}$ instead of $\norm{x_t}_\infty$ is used to define the coefficients \eqref{eq:tight_constants_2}. 

As a motivation for the usage of a pre-stabilizing controller, the coefficients for $K=0$ (i.e., without input constraint tightening) are plotted in Figure \ref{fig:param_a_unstable}. Note that even for $k = 4$ it holds that $a_{c,k} \approx 233$ which already exceeds $x_\mathrm{max}$ and would thus lead to an infeasible OCP even for the smallest possible prediction horizon of the $n$-step scheme, $L=4$.

The states and inputs resulting from the $n$-step scheme in closed loop starting at $x_0 = \begin{bmatrix} 4 & -4 & 0 & 0 \end{bmatrix}^\top$ can be found in Figure \ref{fig:sim}. It can be seen that the predictive control scheme works as desired, meaning it stabilizes the states at the origin, while satisfying the state and input constraints.

\section{Conclusion}\label{sec:conclusion}

In this paper, we introduced a data-driven predictive control scheme relying on predictions based on a priori measured data, structured in Hankel matrices. The scheme is capable of stabilizing the origin of an LTI system, even in the presence of process disturbances acting on the system state. To this end, a constraint tightening is proposed, which can be set up using only a priori measured data. The presented MPC scheme allows for the usage of a pre-stabilizing controller and an associated input constraint tightening, which enables the use also for a priori unstable systems. Closed-loop recursive feasibility, practical exponential stability, and constraint satisfaction of the control scheme is shown. Moreover, the MPC, initially introduced for available state measurements, is extended to the case that only output measurements are available. The numerical experiments illustrated the applicability of the proposed approach and underlined the necessity to include a pre-stabilization and corresponding input constraint tightening in order to design a feasible controller. Interesting issues for future research include the development of less conservative constraint tightenings, in particular in the output-feedback case, as well as a data-based techniques for obtaining estimates of $c_{pe}$ and the controllability constant $\Gamma$ from noisy data, which are guaranteed overapproximations of the real system constants.

%\section*{Acknowledgments}
%Funding information also here

%\clearpage
%\nocite{*}% Show all bib entries - both cited and uncited; comment this line to view only cited bib entries;
\bibliography{references}

\begin{thebibliography}{10}

\bibitem{Willems2005}
Jan~C Willems, Paolo Rapisarda, Ivan Markovsky, and Bart~LM De~Moor.
\newblock A note on persistency of excitation.
\newblock {\em Systems \& Control Letters}, 54(4):325--329, 2005.

\bibitem{Yang2015}
Hua Yang and Shaoyuan Li.
\newblock A data-driven predictive controller design based on reduced hankel
  matrix.
\newblock In {\em 2015 10th Asian Control Conference (ASCC)}, pages 1--7. IEEE,
  2015.

\bibitem{Coulson2019a}
Jeremy Coulson, John Lygeros, and Florian D{\"o}rfler.
\newblock Data-enabled predictive control: In the shallows of the {DeePC}.
\newblock In {\em 2019 18th European Control Conference (ECC)}, pages 307--312.
  IEEE, 2019.

\bibitem{Berberich2020a}
Julian Berberich, Johannes K{\"o}hler, Matthias~A M{\"u}ller, and Frank
  Allg{\"o}wer.
\newblock Data-driven model predictive control with stability and robustness
  guarantees.
\newblock {\em IEEE Transactions on Automatic Control}, 66(4):1702--1717, 2020.

\bibitem{Coulson2021}
J.~Coulson, J.~Lygeros, and F.~D{\"o}rfler.
\newblock Distributionally robust chance constrained data-enabled predictive
  control.
\newblock {\em IEEE Trans. Automat. Control}, 2021.
\newblock doi: 10.1109/TAC.2021.3097706.

\bibitem{Huang2021a}
Linbin Huang, Jianzhe Zhen, John Lygeros, and Florian D{\"o}rfler.
\newblock Robust data-enabled predictive control: Tractable formulations and
  performance guarantees.
\newblock {\em arXiv preprint arXiv:2105.07199}, 2021.

\bibitem{Yin2021a}
Mingzhou Yin, Andrea Iannelli, and Roy~S Smith.
\newblock Maximum likelihood estimation in data-driven modeling and control.
\newblock {\em IEEE Transactions on Automatic Control}, 2021.

\bibitem{Yin2021b}
Mingzhou Yin, Andrea Iannelli, and Roy~S Smith.
\newblock Maximum likelihood signal matrix model for data-driven predictive
  control.
\newblock In {\em Learning for Dynamics and Control}, pages 1004--1014. PMLR,
  2021.

\bibitem{Xue2021}
Anton Xue and Nikolai Matni.
\newblock Data-driven system level synthesis.
\newblock In {\em Learning for Dynamics and Control}, pages 189--200. PMLR,
  2021.

\bibitem{Furieri2021}
Luca Furieri, Baiwei Guo, Andrea Martin, and Giancarlo Ferrari-Trecate.
\newblock Near-optimal design of safe output feedback controllers from noisy
  data.
\newblock {\em arXiv preprint arXiv:2105.10280}, 2021.

\bibitem{Berberich2021}
Julian Berberich, Johannes K{\"o}hler, Matthias~A M{\"u}ller, and Frank
  Allg{\"o}wer.
\newblock Linear tracking mpc for nonlinear systems part ii: The data-driven
  case.
\newblock {\em arXiv preprint arXiv:2105.08567}, 2021.

\bibitem{Markovsky2021}
Ivan Markovsky and Florian D{\"o}rfler.
\newblock Behavioral systems theory in data-driven analysis, signal processing,
  and control.
\newblock {\em Annual Reviews in Control}, 52:42--64, 2021.

\bibitem{Kouvaritakis2016}
Basil Kouvaritakis and Mark Cannon.
\newblock {\em Model Predictive Control}.
\newblock Springer International Publishing, 2016.

\bibitem{Chisci2001}
Luigi Chisci, John~Anthony Rossiter, and Giovanni Zappa.
\newblock Systems with persistent disturbances: predictive control with
  restricted constraints.
\newblock {\em Automatica}, 37(7):1019--1028, 2001.

\bibitem{Mayne2005}
David~Q Mayne, Mar{\'\i}a~M Seron, and SV~Rakovi{\'c}.
\newblock Robust model predictive control of constrained linear systems with
  bounded disturbances.
\newblock {\em Automatica}, 41(2):219--224, 2005.

\bibitem{Aswani2013}
Anil Aswani, Humberto Gonzalez, S~Shankar Sastry, and Claire Tomlin.
\newblock Provably safe and robust learning-based model predictive control.
\newblock {\em Automatica}, 49(5):1216--1226, 2013.

\bibitem{Terzi2019}
Enrico Terzi, Lorenzo Fagiano, Marcello Farina, and Riccardo Scattolini.
\newblock Learning-based predictive control for linear systems: A unitary
  approach.
\newblock {\em Automatica}, 108:108473, 2019.

\bibitem{Berberich2020b}
J.~Berberich, J.~K{\"o}hler, M.~A. M{\"u}ller, and F.~Allg{\"o}wer.
\newblock Robust constraint satisfaction in data-driven {MPC}.
\newblock In {\em Proc. 59th IEEE Conf. Decision and Control (CDC)}, pages
  1260--1267, 2020.

\bibitem{Huang2021}
L.~Huang, J.~Coulson, J.~Lygeros, and F.~D{\"o}rfler.
\newblock Decentralized data-enabled predictive control for power system
  oscillation damping.
\newblock {\em IEEE Trans. Control Systems Technology}, 2021.
\newblock doi: 10.1109/TCST.2021.3088638.

\bibitem{Umenberger2021}
J.~Umenberger.
\newblock Closed-loop data-enabled predictive control.
\newblock In {\em Proc. American Control Conf. (ACC)}, pages 3349--3356. IEEE,
  2021.

\bibitem{Liu2021}
Wenjie Liu, Jian Sun, Gang Wang, Francesco Bullo, and Jie Chen.
\newblock Data-driven resilient predictive control under denial-of-service.
\newblock {\em arXiv preprint arXiv:2110.12766}, 2021.

\bibitem{Berberich2020c}
Julian Berberich, Anne Koch, Carsten~W Scherer, and Frank Allg{\"o}wer.
\newblock Robust data-driven state-feedback design.
\newblock In {\em 2020 American Control Conference (ACC)}, pages 1532--1538.
  IEEE, 2020.

\bibitem{VanWaarde2020b}
H.~J. {Van Waarde}, M.~K. {Camlibel}, and M.~Mesbahi.
\newblock From noisy data to feedback controllers: non-conservative design via
  a matrix {S}-lemma.
\newblock {\em IEEE Trans. Automat. Control}, 2020.
\newblock doi: 10.1109/TAC.2020.3047577.

\bibitem{Doerfler2022}
Florian D{\"o}rfler, Jeremy Coulson, and Ivan Markovsky.
\newblock Bridging direct \& indirect data-driven control formulations via
  regularizations and relaxations.
\newblock {\em IEEE Transactions on Automatic Control}, 2022.

\bibitem{Wildhagen2021}
Stefan Wildhagen, Julian Berberich, Michael Hertneck, and Frank Allg{\"o}wer.
\newblock Data-driven estimation of the maximum sampling interval: analysis and
  controller design for discrete-time systems.
\newblock {\em arXiv preprint arXiv:2101.00903}, 2021.

\bibitem{Goodwin2014}
G.~C. Goodwin and K.~S. Sin.
\newblock {\em Adaptive filtering prediction and control}.
\newblock Courier Corporation, 2014.

\bibitem{Berberich2020d}
J.~Berberich, C.~W. Scherer, and F.~Allg{\"o}wer.
\newblock Combining prior knowledge and data for robust controller design.
\newblock {\em IEEE Trans. Automat. Control}, 2020.
\newblock submitted, preprint: arXiv:2009.05253.

\bibitem{Markovsky2008}
I.~Markovsky and P.~Rapisarda.
\newblock Data-driven simulation and control.
\newblock {\em Int. J. Control}, 81(12):1946--1959, 2008.

\bibitem{Cai2008}
C.~Cai and A.~R. Teel.
\newblock Input--output-to-state stability for discrete-time systems.
\newblock {\em Automatica}, 44(2):326--336, 2008.

\bibitem{Wie1992}
Bong Wie and Dennis~S Bernstein.
\newblock Benchmark problems for robust control design.
\newblock {\em Journal of Guidance, Control, and Dynamics}, 15(5):1057--1059,
  1992.

\end{thebibliography}
\bibliographystyle{unsrt}
%\bibliography{wileyNJD-AMA}

\end{document}